\newtheorem{definition}{Definition}
\newtheorem{theorem}[definition]{Theorem}
\newtheorem{proposition}[definition]{Proposition}
\newtheorem{lemma}[definition]{Lemma}
\newtheorem{corollary}[definition]{Corollary}
\newtheorem{rem}[definition]{Remark}
\newenvironment{remark}{\begin{rem}  \rm }{\end{rem}}
\newtheorem{rems}[definition]{Remarks}
\newtheorem{example}[definition]{Example}
\newcommand{\mi}{{\mathrm{i}}}
\newcommand\unit{\hbox{\rm 1\kern-2.8truept l}} 
\newcommand\re{\Re \kern-1.4truept e}
\newcommand\im{\Im \kern-1.4truept m}
\newcommand\Ran{{\rm{Ran\,}}}
\newcommand\spanno{{\rm{span}}}
\newcommand\M{\mathcal{M}}
\def\rel3{\alpha}
\newcommand\de{{\rm{d}}}
\newcommand{\h}{\mathsf{h}} 
\newcommand{\kk}{\mathsf{k}} 
\newcommand{\mm}{\mathsf{m}}
\newcommand{\T}{\mathcal{T}} 
\newcommand{\dissT}{\T^{{\rm da}}}
\newcommand{\dissL}{\Ll^{{\rm da}}}
\newcommand{\dfT}{\T^{{\rm df}}}
\newcommand{\dfL}{\Ll^{{\rm df}}}
\newcommand{\Ll}{\mathcal{L}} 
\newcommand{\Pmin}[1]{{\mathcal{P}_{\rm min}}(#1)}  
\newcommand{\Z}[1]{{\mathcal{Z}}(#1)}  
\newcommand{\nT}{\mathcal{N(\mathcal{T})}} 
\newcommand{\FT}{\mathcal{F(\mathcal{T})}} 
\newcommand{\B}{\mathcal{B}(\h) } 
\newcommand{\E}{\mathcal{E}} 
\newcommand{\R}{\mathbb{R}} 
\newcommand{\C}{\mathbb{C}} 
\newcommand{\nm}{\|} 
\newcommand{\Ss}{\mathsf{s}} 
\newcommand{\ff}{\mathsf{f}} 
\newcommand{\Rr}[1]{\mathcal{R}(#1)} 
\newcommand{\scal}[2]{\langle{#1},{#2}\rangle} 
\newcommand{\ee}[2]{|{e_{#1}}\rangle\langle{e_{#2}}|} 
\newcommand{\kb}[2]{|{#1}\rangle\langle{#2}|}
\def\red{\color{red}}
\def\blu{\color{blue}}
\newcommand{\tr}[1]{{\rm tr }(#1)}   
\begin{document}

\title{The general structure of the Decoherence-free subalgebra for uniformly continuous Quantum Markov semigroups
}


\author{Emanuela Sasso        \and
        Veronica Umanit\`a 
}

\maketitle

\begin{abstract}
By using the decomposition of the decoherence-free subalgebra $\nT$ in direct integrals of factors, we obtain a structure theorem for every uniformly continuous QMSs. Moreover we prove that, when there exists a faithful normal invariant state, $\nT$ has to be atomic and decoherence takes place.
\end{abstract}

\section{Introduction}
The irreversible evolution of an Open Quantum System can be described by a Quantum Markov Semigroups (QMS) on a von Neumann algebra, i.e. a weakly$^*$ continuous semigroup $\T=(\T_t)_{t\geq 0}$ of normal, completely positive, identity preserving maps. The loss of $*-$automorphic dynamics, due to the interaction of the system with the environment, is often indicated by physicists with the name of \lq\lq decoherence\rq\rq. In order to reduce this phenomenon it is important to identify a \emph{decoherence-free subalgebra}, i.e. a maximal subalgebra on which the system behaves as a closed system (see \cite{LW,KL,BlOl,O,O2,TV}). In other words, the decoherence-free subalgebra is noisy-free with respect to the typical effects of the interaction. From a mathematical point of view this idea is expressed in a strengthened form in the definition of environmental decoherence introduced by Blanchard and Olkievicz in \cite{BlOl} (see also \cite{Hellmich,CSU2,CSU-new}), because it requires that non-automorphic part of the evolution vanishes in time. \\
For a special class of open quantum systems, described by uniformly continuous QMSs on $\B$ (the space of all bounded operators on a complex separable Hilbert space $\h$), the decoherence-free subalgebra is uniquely determined and it coincides with $\nT$, the biggest von Neumann algebra on which the semigroups $\T$ acts as a $*-$automorphism (see  \cite{DFR,FFRR,FV82}). In \cite{DFSU,FSU-NtFt} we showed that, whenever $\nT$ is atomic,  i.e. it can be written as direct sum of type $I$ factors, its block diagonal structure forces the Lindblad operators to have a diagonal form, and induces a decomposition of the system into its noiseless and purely dissipative part, also determining the structure of invariant states. It is then natural to ask what happens if $\nT$ is not atomic, if it is however possible to provide an explicit caracterization of this algebra.

The atomicity property of the decoherence-free subalgebra plays a crucial role also in the study of environmental decoherence, i.e. in the possibility to decompose the algebra $\B$ into the direct sum of $\nT$ and a remaining space on which the dynamics vanishing in time: indeed we proved in \cite[Theorem 11]{FSU-NtFt} that, assuming the existence of a faithful normal invariant state,  the atomicity of $\nT$ is equivalent to have the previous splitting with $\nT$ equal to the so-called \emph{reversible algebra} $\mathfrak{M}_r$, i.e. the weak$^*$ closure of the algebra generated by the eigenvectors of $\T_t$ corresponding to zero real part eigenvalues. This algebra appears in another asymptotic decomposition of $\B$, the Jacobs-de Leeuw-Glicksberg splitting (\cite{batkai,Hellmich}), and it is always the image of a normal conditional expectation. This last property ensures that $\mathfrak{M}_r$ is atomic.
So it is quite natural to compare the decoherence-free subalgebra and the reversible algebra also in a more general context, i.e. when $\nT$ is not necessarily atomic. 

In this paper we deepen the analysis of $\nT$ providing the following results:
\begin{enumerate}
\item $\nT$ is decomposable with respect a suitable direct integral representation of $\h$, and its structure influences that one of the infinitesimal generator of $\T$. The starting idea has been to substitute the direct sum of type $I$ factors in the decomposition of an atomic algebra, with the splitting in direct integral of factors, through the central decomposition induced by the center of $\nT$. 

\item If the QMS has a faithful normal invariant state, $\nT$ coincides with the reversible algebra $\mathfrak{M}_r$ and so it has to be atomic.
\end{enumerate}
This last result has many interesting consequences on the study of QMSs, since it allows to solve some open problems concerning environmental decoherence. Indeed we obtain that, under the assumption of the existence of a faithful invariant state, the decomposition of $\B$ induced by decoherence always exists, it is uniquely determined and it coincides with the Jacobs-de Leeuw-Glicksberg splitting. This is exactly the same result we found in the finite-dimensional case (see \cite{CSU2}).

The paper is organized in the following way. In the second section we recall some basic facts on the structure of a uniformly continuous QMS with atomic decoherence-free subalgebra. In order to show that not all decoherence-free subalgebras are atomic, we exhibit an example in which $\nT$ is a type $II_1$ factor. In Section $3$ we introduce the decomposition of $\nT$ in direct integral of factors and determine the structure induced on the infinitesimal generator. Section $4$ contains the main results of the paper: the atomicity of $\nT$ when there exists a faithful normal invariant state and its equivalence with the reversible algebra. {We remind in Appendix some results about the theory of direct integrals of von Neumann algebras.}

\section{The decoherence-free subalgebra $\nT$}

Let $\h$ be a complex separable Hilbert space. A QMS on 
the algebra $\B$ of all linear and bounded operators on $\h$ is a weakly$^*$ continuous semigroup 
$\T=(\T_t)_{t\ge 0}$ of completely positive, identity preserving and
normal maps. We will make 
the assumption from now on that $\T$ is indeed uniformly  continuous 
i.e. $\lim_{t\to 0^+} \sup_{\Vert x\Vert \le 1}
\left\Vert \T_t(x)-x\right\Vert=0$. Its generator $\Ll$  can be then
represented in the well-known  (see \cite{Partha},\cite{SG})
 Gorini-Kossakowski-Sudarshan-Lindblad (GKSL)  form as 
\begin{equation}\label{eq:GKSL}
\Ll(x)=\mi [H,x]-\frac{1}{2}\sum_{\ell\geq 1}
\left(L_\ell^*L_\ell x-2L_\ell^*xL_\ell+xL_\ell^*L_\ell\right),
\end{equation}
where $H=H^*$ and $(L_\ell)_{\ell\geq 1}$ are operators on 
$\h$ such that the series $\sum_{\ell\geq 1}L^*_\ell L_\ell$ is
strongly convergent  and $[\cdot,\cdot]$ denotes the 
commutator $[x,y]=xy-yx$. 
The choice of operators $H$ and 
$(L_\ell)_{\ell\geq 1}$ is not unique (see Parthasarathy 
\cite{Partha} Theorem 30.16), however, this will not create 
any inconvenience in this paper.

Given a GKSL representation of $\Ll$ we call $\Ll_0$
\[
\Ll_0(x):=-\frac{1}{2}\sum_{\ell\geq 1}
\left(L_\ell^*L_\ell x-2L_\ell^*xL\ell+xL_\ell^*L_\ell\right),
\qquad x\in\B,
\]
dissipative part of $\Ll$ and $\mi \delta_H(x):=\mi[H,x]$  
Hamiltonian part  of $\Ll$ by abuse of language. Clearly, we 
have $\Ll=\mi\delta_H+\Ll_0$.

The \emph{decoherence-free (DF) subalgebra} of $\T$ is 
defined by
\begin{equation}\label{eq:NT-def}
\nT=\{x\in\B\,\mid\,\T_t(x^*x)=\T_t(x)^*\T_t(x),\ \T_t(xx^*)=\T_t(x)\T_t(x)^*\ \ \forall\,t\geq 0\}.
\end{equation}
It is a well known fact that $\nT$ is the biggest von Neumann subalgebra of $\B$ on which every $\T_t$ acts as a 
$*$-automorphism (see e.g. \cite{Evans} Theorem 3.1, \cite{DFR}  Proposition 2.1), {i.e. the system associated with $\nT$ is the biggest one evolving as a closed system. This explains the name given to this algebra.}

In the following proposition we recall some preliminary properties of $\nT$, whose proof can be found in \cite{DFSU}.

\begin{proposition}\label{prop-struct-NT}
Let $\T$ be a QMS on $\B$ and let 
${\mathcal{N}}(\T)$ be the set defined by \eqref{eq:NT-def}. 
Then
\begin{enumerate}
\item $\nT$ is $\T_t$-invariant for all $t\ge 0$,
\item for all $x\in\nT$, $y\in\B$ and $t\geq 0$ we have 
$\T_t(x^*y)=\T_t(x^*)\T_t(y)$ and 
$\T_t(y^*x)=\T_t(y^*)\T_t(x)$,
\item $\nT$ is a von Neumann subalgebra of $\B$,
\item $
\nT\subseteq \left\{x\in\B\,\mid\,\T_t(x)
=\hbox{\emph{e}}^{\mi tH}x\,\hbox{\emph{e}}^{-\mi tH}\ \  
\forall\,t\geq 0\right\}
$, for all self-adjoint operator $H$ in GKSL representation of $\Ll$,
\item $\nT$ is the commutant of the set 
of operators 
\begin{equation}\label{eq:iter-comm}
\left\{\delta_H^{n}(L_\ell),\delta_H^{n}(L_\ell^*)\,
\mid\,n\geq 0, \ell\geq 1\right\}.
\end{equation}
\end{enumerate}
\end{proposition}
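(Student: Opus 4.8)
The plan is to reduce every assertion to a single algebraic fact — that membership in $\nT$ forces $x$ to commute with all the Lindblad operators $L_\ell$ — and then to read off the remaining properties. Uniform continuity is used throughout: it makes $\Ll$ bounded (so every $x$ lies in its domain) and makes $t\mapsto\T_t(a)$ norm-differentiable and $t\mapsto e^{-\mi tH}L_\ell e^{\mi tH}$ norm-analytic, which legitimizes all the differentiations below. I would carry out the steps in the order (2), (1), a commutation lemma, (4), (5), (3).

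First I would prove (2), the foundation, from the Kadison--Schwarz inequality for the $2$-positive map $\T_t$. For $x\in\nT$ and arbitrary $y\in\B$, applying $\T_t$ entrywise to the positive operator matrix $\bigl[\begin{smallmatrix} y^*y & y^*x\\ x^*y & x^*x\end{smallmatrix}\bigr]\ge 0$ and subtracting the positive matrix with entries $\T_t(a)^*\T_t(b)$ gives, by Kadison--Schwarz, a positive $2\times 2$ matrix whose lower-right entry $\T_t(x^*x)-\T_t(x)^*\T_t(x)$ vanishes by the definition of $\nT$. A positive $2\times2$ operator matrix with a null diagonal entry has null off-diagonal entries, which (using $\T_t(x^*)=\T_t(x)^*$) yields both identities of (2). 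Next, (1) is immediate from the semigroup law: for $x\in\nT$ and $z=\T_s(x)$ one has $z^*z=\T_s(x)^*\T_s(x)=\T_s(x^*x)$, whence $\T_t(z^*z)=\T_{t+s}(x^*x)=\T_{t+s}(x)^*\T_{t+s}(x)=\T_t(z)^*\T_t(z)$, and symmetrically for $zz^*$, so $\T_s(x)\in\nT$.

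The crucial commutation lemma comes from differentiating $\T_t(x^*x)=\T_t(x)^*\T_t(x)$ at $t=0$, giving $\Ll(x^*x)=\Ll(x^*)x+x^*\Ll(x)$. Since the Hamiltonian part $\mi\delta_H$ is already a derivation it cancels, and a direct computation yields the \emph{carr\'e du champ} identity $\Ll_0(x^*x)-\Ll_0(x^*)x-x^*\Ll_0(x)=\sum_{\ell}\com{L_\ell}{x}^*\com{L_\ell}{x}$. Its left side is zero, so this strongly convergent sum of positive terms vanishes, forcing $\com{L_\ell}{x}=0$ for every $\ell$; applying the same to $x^*\in\nT$ gives $\com{L_\ell^*}{x}=0$. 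With this, $\Ll_0(x)=0$, so $\Ll(x)=\mi\com{H}{x}$ on $\nT$; combined with (1), the $\nT$-valued curve $\T_t(x)$ solves $\tfrac{\de}{\de t}\T_t(x)=\mi\com{H}{\T_t(x)}$, whose unique solution is $\T_t(x)=e^{\mi tH}xe^{-\mi tH}$, which is (4). For (5), set $\M=\{\delta_H^{n}(L_\ell),\delta_H^{n}(L_\ell^*)\mid n\ge0,\ell\ge1\}'$. The inclusion $\nT\subseteq\M$ follows from (4): $\com{L_\ell}{e^{\mi tH}xe^{-\mi tH}}=0$ rewrites as $\com{e^{-\mi tH}L_\ell e^{\mi tH}}{x}=0$ for all $t$, and differentiating $n$ times at $0$ (where the $n$-th derivative of $e^{-\mi tH}L_\ell e^{\mi tH}$ is $(-\mi)^n\delta_H^{n}(L_\ell)$) gives $\com{\delta_H^{n}(L_\ell)}{x}=0$, likewise for $L_\ell^*$. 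Conversely, if $x\in\M$ then $\Ll_0(x)=0$ and, by the Jacobi identity, $\M$ is $\delta_H$-invariant, so $\Ll=\mi\delta_H$ maps $\M$ into itself and $\T_t$ restricts to conjugation by the unitary $e^{\mi tH}$, i.e. a $*$-automorphism of $\M$; hence the multiplicative-domain conditions hold and $x\in\nT$. Thus $\nT=\M$, and being a commutant it is automatically a von Neumann algebra, which is (3).

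The hard part will be the \emph{carr\'e du champ} identity together with the passage from it to pointwise commutation: this is the one place where the abstract multiplicative-domain definition is converted into the concrete condition $\com{L_\ell}{x}=0$, and it is the hinge on which (4) and (5) turn. Everything else — the Schwarz matrix trick, uniqueness for the linear ODE, the Jacobi identity, and the norm-analyticity of $e^{-\mi tH}L_\ell e^{\mi tH}$ — is routine and is made painless by uniform continuity. The only care points are that the series $\sum_\ell\com{L_\ell}{x}^*\com{L_\ell}{x}$ converges strongly so that vanishing of the sum forces each summand (hence each commutator) to vanish, and that $\delta_H$-invariance of $\M$ indeed gives invariance under conjugation by $e^{\mi tH}$, used in the converse inclusion of (5).
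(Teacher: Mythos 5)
Your proposal is correct, and it is essentially the standard argument that the paper itself defers to (the proposition is stated with proof cited to \cite{DFSU}/\cite{Evans}): multiplicative domain via complete positivity for item 2, differentiation of the multiplicativity identity to get the carr\'e du champ $\sum_\ell\com{L_\ell}{x}^*\com{L_\ell}{x}=0$ and hence $\com{L_\ell}{x}=\com{L_\ell^*}{x}=0$, and iteration with $\delta_H$ for the commutant characterization. No gaps worth flagging beyond the care points you already identify.
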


In \cite{DFSU,FSU-NtFt} we have extensively studied the structure of $\nT$ when this algebra is atomic.\\

{We recall some preliminary definitions and results on von Neumann algebras.\\
Given a von Neumann algebra $\M$, we denote by $\Pmin{\M}$ the set of its minimal projections, and by $\Z{\M}$ its center, i.e. the von Neumann algebra
$$\Z{\M}:=\{x\in\M\,:[x,y]=0\ \ \forall\,y\in\M\}.$$
If $p$ is a projection in $\M$, its \emph{central support} $z_p$ is the smallest projection in $\Z{\M}$ such that $p\leq z_p$. 

A project $p$ is called \emph{finite} if, whenever we have $p=u^*u$ e $uu^*\leq p$ for some $u\in\M$, then $uu^*=p$ (Definition $6.3.1$ in \cite{kadison}).
\begin{definition}
Let $\M$ be a von Neumann algebra acting on $\h$. 
\begin{itemize}
\item[-]$\M$ is called \emph{atomic} if for every non-zero projection $p\in\M$ there exists $q\in\Pmin{\M}$, $q\neq0$, such that $q\leq p$.
\item[-]$\M$ is a \emph{factor} if $\Z{\M}=\mathbb{C}\unit$.
\item[-]$\M$ is a \emph{type I factor} if it is a factor and possesses a non-zero minimal projection.
\item[-]$\M$ is a \emph{type II factor} if there are no minimal projections but there are non-zero finite projections. In particular $\M$ is a type $II_1$ factor if $\unit$ is finite.
\item[-]$\M$ is a \emph{type III factor} if it has no non-zero finite projections.
\end{itemize}
\end{definition}
We refer to Kadison book \cite{kadison} for these definitions (Definition $6.5.1$ and Corollary $6.5.3$). }

The following result gives some characterizations of the atomicity (see Theorem 5 of \cite{tomi3}, Theorem iv.2.2.2 of \cite{Black}, Theorem $6$ and Proposition A.1 in \cite{DFSU}).
\begin{proposition}\label{E-atomic}
Let $\M$ be a von Neumann subalgebra of $\B$. The following facts are equivalent:
\begin{enumerate}
\item $\M$ is atomic,
\item $\M$ is the image of a normal conditional expectation (i.e. a normal norm one projection) $\E:\B\to\M$,
\item there exists a countable family $(p_i)_{i\in I}$ of pairwise orthogonal minimal projections in $\Z{\M}$ such that $\sum_ip_i=\unit$ and $p_i\M p_i$ is a type I factor for all $i\in I$.
\end{enumerate}
\end{proposition}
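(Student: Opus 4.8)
The plan is to prove the cycle $(3)\Rightarrow(1)\Rightarrow(3)$ together with $(3)\Rightarrow(2)\Rightarrow(1)$, so that $(1)\Leftrightarrow(3)$ is immediate and $(2)$ is linked to the other two by the last pair of implications.

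First I would settle $(1)\Leftrightarrow(3)$, which is the classical structure theorem for atomic algebras. For $(3)\Rightarrow(1)$, given a nonzero projection $p\in\M$, from $\sum_i p_i=\unit$ there is an index $i$ with $p_i p\neq 0$; since $p_i\in\Z{\M}$, the operator $p_ip$ is a nonzero projection of the type I factor $p_i\M p_i$, hence dominates a minimal projection $q$ of $p_i\M p_i$, and centrality of $p_i$ makes $q$ minimal already in $\M$, with $q\le p$. For $(1)\Rightarrow(3)$ I would show that the central support $z_q$ of any $q\in\Pmin{\M}$ is a minimal projection of $\Z{\M}$: if $0\neq z\le z_q$ is central then $zq\le q$ forces $zq\in\{0,q\}$, the first case giving $z=z_q$ and the second the contradiction $z\le\unit-z_q$. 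Since the center of $z_q\M z_q$ equals $\Z{\M}z_q=\C z_q$, this corner is a factor, and it contains the minimal projection $q$, hence is of type I. A maximal family $(p_i)_{i\in I}$ of pairwise orthogonal minimal central projections, obtained by Zorn's lemma, then sums to $\unit$: otherwise atomicity would yield a minimal projection below $\unit-\sum_i p_i$ whose central support contradicts maximality, and separability of $\h$ forces $I$ to be countable.

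To obtain the conditional expectation in $(3)\Rightarrow(2)$ I would argue block by block. Writing $p_i\h\cong\mathcal K_i\otimes\mathcal G_i$ so that $p_i\M p_i=\mathcal B(\mathcal K_i)\otimes\unit_{\mathcal G_i}$, I fix a normal state $\omega_i$ on $\mathcal B(\mathcal G_i)$ and take the slice map $\E_i:=\mathrm{id}\otimes\omega_i$, which is a normal norm-one projection of $\mathcal B(p_i\h)$ onto $p_i\M p_i$. Setting $\E(x):=\sum_i\E_i(p_i x p_i)$ then defines, using normality of the $\E_i$ and $\sigma$-weak convergence of the central sum, a normal conditional expectation of $\B$ onto $\M$.

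The implication $(2)\Rightarrow(1)$ is the delicate point, and the one I expect to be the main obstacle. First I would reduce to producing a single minimal projection: for $0\neq p\in\M$ the map $x\mapsto p\E(x)p=\E(pxp)$ is a normal conditional expectation of $\mathcal B(p\h)$ onto $p\M p$, and any projection minimal in $p\M p$ is automatically minimal in $\M$; it therefore suffices to know that any $\M$ carrying a normal conditional expectation from the ambient factor possesses a nonzero minimal projection. The core difficulty is to transport the atomicity of $\B$ down to $\M$, and I would do this through preduals: being normal, $\E$ admits a predual $\E_*$, a positive norm-one projection of the trace-class $\B_*$ onto $\M_*$; since $\B_*$ is an atomic $L$-space, with atoms the rank-one positive operators, and since atomicity of a von Neumann algebra is equivalent to atomicity of its predual, the $1$-complemented positive image $\M_*$ is again atomic, which gives $(1)$. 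Establishing that a $1$-complemented positive subspace of an atomic predual stays atomic is exactly where the functional-analytic work concentrates; here I would lean on the versions of the statement recorded in \cite{tomi3} and \cite[Theorem iv.2.2.2]{Black}.
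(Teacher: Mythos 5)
Your write-up is correct where it is worked out, but it is worth noting that the paper itself gives no proof of this proposition at all: it is stated with a pointer to Theorem 5 of \cite{tomi3}, Theorem IV.2.2.2 of \cite{Black} and the appendix of \cite{DFSU}. So your proposal is strictly more detailed than the source. Your arguments for $(3)\Rightarrow(1)$, $(1)\Rightarrow(3)$ (minimality of the central support $z_q$, the identification $\Z{z_q\M z_q}=\Z{\M}z_q=\C z_q$, and the Zorn/maximality argument with countability from separability of $\h$) and $(3)\Rightarrow(2)$ (slice maps $\mathrm{id}\otimes\omega_i$ on each block $p_i\M p_i\cong\mathcal{B}(\mathcal K_i)\otimes\unit_{\mathcal G_i}$, summed over the central decomposition) are all sound and essentially the standard proofs; the only point you leave implicit in $(3)\Rightarrow(2)$ is the $\sigma$-weak convergence and normality of the infinite sum $\sum_i\E_i(p_i\cdot p_i)$, which is routine on bounded sets.

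The one genuine soft spot is $(2)\Rightarrow(1)$, and you correctly identify it as such. Two remarks. First, the reduction via compression by $p$ is fine (Tomiyama's module property gives $p\E(x)p=\E(pxp)$, and a projection minimal in $p\M p$ with $p$ arbitrary is minimal in $\M$), but it is not really needed: the cited result of Tomiyama is already the global statement that the range of a normal norm-one projection from an atomic $W^*$-algebra is atomic. Second, your predual sketch is looser than the rest: $\mathcal{B}(\h)_*$ is the trace class, a noncommutative $L^1$-space rather than an $L$-space, and the assertion that a positively $1$-complemented subspace of an atomic predual is again atomic is not an off-the-shelf Banach-space fact --- it is essentially a reformulation of the very theorem you are trying to prove, so the passage to preduals repackages the difficulty rather than resolving it. Since you explicitly defer this step to \cite{tomi3} and \cite{Black}, which is exactly what the paper does, I would not call this a gap, but if you want the argument self-contained you would need to reproduce Tomiyama's proof (or the order-theoretic argument in \cite{Black}) at this point rather than the predual heuristic.
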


In \cite{DFSU} we explained as the block-diagonal structure of an atomic decoherence-free subalgebra (see item $3$ of the previous proposition) has important consequences on the structure of the semigroup and its invariant states. In particular, the Lindblad operators in any GKSL representation of the generator inherit this block-decomposition.

More precisely our result is the following (see Theorem $3.2$ in \cite{DFSU}).

\begin{theorem}\label{th:main} 
$\nT$ is an atomic algebra if and only if there exist two countable sequences of Hilbert spaces $(\kk_i)_i$, $(\mm_i)_i$ such that (up to a unitary isomorphism) $\h=\oplus_{i\in I}\left(\kk_i\otimes\mm_i\right)$ and $\nT=\oplus_{i\in I}\left(\mathcal{B}(\kk_i)\otimes\unit_{\mm_i}\right)$.\\
In particular in this case, the following facts hold:
\begin{enumerate}
\item for every GKSL representation of $\Ll$ by means of operators 
$H,(L_\ell)_{\ell\ge 1}$, we have 
\[
L_\ell=\oplus_{i\in I}
\left(\unit_{\kk_i}\otimes N_\ell^{(i)}\right)
\] for a collection 
$(N_\ell^{(i)})_{\ell\geq 1}$ of operators in 
$\mathcal{B}(\mm_i)$, such that the series 
$\sum_{\ell\ge 1}N_\ell^{(i)*}N_\ell^{(i)}$  
{\red is} strongly convergent for all $i\in I$, and 
\[
H=\oplus_{i\in I}\left(H_i\otimes\unit_{\mm_i}
+\unit_{\kk_i}\otimes N_0^{(i)}\right)
\]
for  self-adjoint operators $N_i\in\mathcal{B}(\kk_i)$ 
and $N_0^{(i)}\in\mathcal{B}(\mm_i)$, $i\in I$,
\item  defining on the algebra
$\mathcal{B}\left(\oplus_{i\in I}
\left(\kk_i\otimes\mm_i\right)\right)$ 
\begin{equation}\label{eq:L-da-df}
\dfL = \mi\left[\oplus_{i\in I}(H_i\otimes\unit_{\mm_i}),\cdot\,\right]\end{equation}  and $\dissL$ as the Lindblad operator given by  $$\{\oplus_{i\in I}\left(\unit_{\kk_i}\otimes N_\ell^{(i)}\right), \oplus_{i\in I}(\unit_{\kk_i}\otimes N_0^{(i)})\,|\,l\geq 1\},$$
we find the commuting generators $\dfL$ and $\dissL$ of two 
commuting QMSs $\dfT$ (the decoherence-free semigroup) 
and $\dissT$ (the decoherence-affected semigroup)  such that 
$\T_t=\dissT_t\circ\dfT_t=\dfT_t\circ\dissT_t$. 
\item we have $\mathcal{N}(T^{\mm_i})=\mathbb{C}\unit_{\mm_i}$, where $\T^{\mm_i}$ is the QMS on $\mathcal{B}(\mm_i)$ whose GKSL generator is given by operators $\{N_0^{(i)}, N_\ell^{(i)}\}_\ell$.
\item the action of $\dfT$ is explicitly given by
$\dfT_t(x)=\hbox{\rm e}^{\mi t \widetilde{H} }x \hbox{\rm e}^{-\mi t  \widetilde{H} }$ for all $x\in\mathcal{B}\left(\oplus_{i\in I}
\left(\kk_i\otimes\mm_i\right)\right)$,
where $ \widetilde{H} $ is the self-adjoint operator $\oplus_{i\in I} (H_i\otimes\unit_{\mm_i})$; moreover
$\mathcal{N}(\dfT)= \mathcal{B}\left(\oplus_{i\in I}
\left(\kk_i\otimes\mm_i\right)\right)$ and
$\mathcal{N}(\dissT)=\nT$.
\item we have $\T_t(x\otimes y)={\rm e}^{\mi tH_i}x{\rm e}^{-\mi tH_i}\,\otimes\T^{\mm_i}(y)$ for $x\in\mathcal{B}(\kk_i)$, $y\in\mathcal{B}(\mm_i)$.
\end{enumerate}
\end{theorem}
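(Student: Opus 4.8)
The plan is to prove the equivalence first, using the structure theory of type I factors, and then to read off the five structural items from the two commutant descriptions of $\nT$ recorded in Proposition \ref{prop-struct-NT}. For the equivalence, suppose $\nT$ is atomic. By the implication $(1)\Rightarrow(3)$ of Proposition \ref{E-atomic} there is a countable family $(p_i)_{i\in I}$ of pairwise orthogonal minimal projections of $\Z{\nT}$ with $\sum_i p_i=\unit$ and each $p_i\nT p_i$ a type I factor acting on $p_i\h$. The classification of type I factors then provides, for every $i$, Hilbert spaces $\kk_i,\mm_i$ and a unitary $p_i\h\cong\kk_i\otimes\mm_i$ carrying $p_i\nT p_i$ onto $\mathcal{B}(\kk_i)\otimes\unit_{\mm_i}$; summing over $i$ yields the stated decomposition. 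The converse is immediate, since in $\h=\oplus_i(\kk_i\otimes\mm_i)$ the projections $\unit_{\kk_i}\otimes\unit_{\mm_i}$ are minimal central, pairwise orthogonal, sum to $\unit$, and each $\mathcal{B}(\kk_i)\otimes\unit_{\mm_i}$ is a type I factor, so atomicity follows from $(3)\Rightarrow(1)$ of Proposition \ref{E-atomic}.

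\textbf{Item 1.} As $\T$ is uniformly continuous, $\Ll$ is bounded and one may take $H$ and the $L_\ell$ bounded. Since the $p_i$ are central, $\nT'=\oplus_i(\unit_{\kk_i}\otimes\mathcal{B}(\mm_i))$, and item 5 of Proposition \ref{prop-struct-NT} places every $L_\ell$ in $\nT'$, forcing $L_\ell=\oplus_i(\unit_{\kk_i}\otimes N_\ell^{(i)})$ with $\sum_\ell N_\ell^{(i)*}N_\ell^{(i)}$ strongly convergent (being the $i$-th block of $\sum_\ell L_\ell^*L_\ell$). For $H$, item 4 of Proposition \ref{prop-struct-NT} says $\T_t=\mathrm{Ad}(e^{\mi tH})$ restricts to a normal $*$-automorphism of $\nT$ preserving $\Z{\nT}=\oplus_i\C p_i$; the minimal central projections forming a discrete set and $t\mapsto\T_t(p_i)$ being continuous with $\T_0(p_i)=p_i$, we get $\T_t(p_i)=p_i$, so $e^{\mi tH}$ commutes with each $p_i$ and $H=\oplus_i H^{(i)}$. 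Then $\mathrm{Ad}(e^{\mi tH^{(i)}})$ preserves $\mathcal{B}(\kk_i)\otimes\unit_{\mm_i}$; since an automorphism of the type I factor $\mathcal{B}(\kk_i)$ is inner, $(e^{\mi tH_i}\otimes\unit)^*e^{\mi tH^{(i)}}$ lands in the commutant $\unit_{\kk_i}\otimes\mathcal{B}(\mm_i)$, giving $e^{\mi tH^{(i)}}=e^{\mi tH_i}\otimes e^{\mi tN_0^{(i)}}$ and hence $H=\oplus_i(H_i\otimes\unit_{\mm_i}+\unit_{\kk_i}\otimes N_0^{(i)})$.

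\textbf{Items 2, 4, 5.} With $\widetilde{H}=\oplus_i(H_i\otimes\unit_{\mm_i})$, Item 1 splits $\Ll=\mi[\,\widetilde{H},\cdot\,]+\dissL=\dfL+\dissL$, where $\dissL$ is the GKSL generator built from $\oplus_i(\unit_{\kk_i}\otimes N_0^{(i)})$ and $\oplus_i(\unit_{\kk_i}\otimes N_\ell^{(i)})$. Because $\widetilde{H}\in\nT$ commutes with all operators of $\dissL$, which lie in $\nT'$, a direct comparison of the two GKSL generators gives $[\dfL,\dissL]=0$, so $\dfT_t,\dissT_t$ are commuting QMSs with $\T_t=\dfT_t\circ\dissT_t=\dissT_t\circ\dfT_t$. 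As $\dfL$ is purely Hamiltonian, $\dfT_t=\mathrm{Ad}(e^{\mi t\widetilde{H}})$ is a $*$-automorphism of $\B$, whence $\mathcal{N}(\dfT)=\B$; and since $[\,H_i\otimes\unit_{\mm_i},\unit_{\kk_i}\otimes(\cdot)\,]=0$ the iterated commutators $\delta_H^n(L_\ell)$ equal those generated by the operators of $\dissL$, so item 5 of Proposition \ref{prop-struct-NT} yields $\mathcal{N}(\dissT)=\nT$. Evaluating on $x\otimes y$ gives $\dfT_t(x\otimes y)=e^{\mi tH_i}xe^{-\mi tH_i}\otimes y$ and $\dissT_t(x\otimes y)=x\otimes\T^{\mm_i}_t(y)$, and composing proves item 5.

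\textbf{Item 3 and the main obstacle.} Restricting $\delta_H^n(L_\ell)$ to block $i$ gives $\unit_{\kk_i}\otimes\delta_{N_0^{(i)}}^n(N_\ell^{(i)})$, since only the $\unit_{\kk_i}\otimes N_0^{(i)}$ summand of $H$ contributes; by item 5 of Proposition \ref{prop-struct-NT} the commutant in $\mathcal{B}(\kk_i\otimes\mm_i)$ of this family (and its adjoints) is $\mathcal{B}(\kk_i)\otimes\unit_{\mm_i}$, so a second commutant shows $\{\delta_{N_0^{(i)}}^n(N_\ell^{(i)}),\delta_{N_0^{(i)}}^n(N_\ell^{(i)*})\}$ generates $\mathcal{B}(\mm_i)$, and applying item 5 of Proposition \ref{prop-struct-NT} to $\T^{\mm_i}$ gives $\mathcal{N}(\T^{\mm_i})=\mathcal{B}(\mm_i)'=\C\unit_{\mm_i}$. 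I expect the genuinely delicate step to be the Hamiltonian factorization in Item 1: one must upgrade the pointwise-in-$t$ splitting $e^{\mi tH^{(i)}}=A_t\otimes B_t$ to \emph{jointly} continuous one-parameter groups, controlling the $\mathbb{U}(1)$ phase ambiguity so that Stone's theorem produces bona fide self-adjoint generators $H_i,N_0^{(i)}$, while checking throughout that the blockwise direct sums and the series $\sum_\ell N_\ell^{(i)*}N_\ell^{(i)}$ converge in the strong topology.
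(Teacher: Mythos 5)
Your proposal is essentially the argument of the source the paper cites for this theorem (Theorem 3.2 of \cite{DFSU}); the paper itself gives no proof, so I compare against that. The equivalence via Proposition \ref{E-atomic} and the type~I factor classification, the derivation of the block form of the $L_\ell$ from $L_\ell\in\nT'=\oplus_i(\unit_{\kk_i}\otimes\mathcal{B}(\mm_i))$, the commutation of $\dfL$ and $\dissL$, the identification $\mathcal{N}(\dissT)=\nT$ via the iterated commutators, and the reduced-commutant argument for $\mathcal{N}(\T^{\mm_i})=\C\unit_{\mm_i}$ are all correct and match the standard route. The one place where your write-up is both more awkward and genuinely incomplete is the step you yourself flag: splitting $H^{(i)}$ by writing $e^{\mi tH^{(i)}}=U_t\otimes V_t$ pointwise in $t$ and then trying to control the $\mathbb{U}(1)$ phase so as to extract continuous one-parameter groups. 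You do not need this cocycle argument at all: since $\T_t(\nT)=\nT$ and $\Ll$ is bounded, $\delta_{H^{(i)}}$ restricts to a bounded $*$-derivation of the type~I factor $\mathcal{B}(\kk_i)\otimes\unit_{\mm_i}$, hence is implemented by some self-adjoint $H_i\otimes\unit_{\mm_i}$ with $H_i\in\mathcal{B}(\kk_i)$; then $H^{(i)}-H_i\otimes\unit_{\mm_i}$ lies in the commutant $\unit_{\kk_i}\otimes\mathcal{B}(\mm_i)$ and is self-adjoint, giving $N_0^{(i)}$ directly, with no phase ambiguity and no appeal to Stone's theorem. With that substitution the proof is complete; as written, the Hamiltonian factorization is the only step left open.
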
  
{Unfortunately, not all the decoherence-free subalgebras of uniformly continuous semigroups on $\B$ are atomic. In the following section we provide an example in which $\nT$ is a type $II_1$ factor.}

\subsection{Example of $\nT$ type $II_1$ factor} 
{First of all we recall some preliminary facts on the group theory (see for example section $6.7$ in \cite{kadison}).}

\smallskip
Given a discrete group $G$ with unit element $e$, we take $\h=\ell^2(G)$ with orthonormal basis $\{1_g\}_{g\in G}$, where $1_g(h)=\delta_{g,h}$.\\ Recalling that, for $u,v\in\h$ the convolution $u\ast v$ is defined as the element in $\ell^\infty(G)$ by
$$(u\ast v)(g)=\sum_{h\in G} u(g^{-1}h)\,v(h)=\sum_{h\in G}u^{-1}(h)\,v(hg),\qquad\forall\,g\in G,$$
we obtain two linear maps from $\h$ to $\ell^\infty(G)$ by setting
$$L_u(v)=u\ast v,\qquad R_u(v)=v\ast u.$$
In particular $L_{1_g}$ and $R_{1_g}$ belong to $\B$ for all $g\in G$, are unitary operators and they generate two von Neumann algebras, $\mathcal{L}_G$ and $\mathcal{R}_G$ respectively, such that $\mathcal{L}_G^\prime=\mathcal{R}_G$ (Theorem $6.7.2$ in \cite{kadison}). Moreover, the following facts hold:
\begin{enumerate}
\item $\mathcal{L}_G=\{L_u\,:\,u\in\h, L_u\in\B\}$ and $\mathcal{R}_G=\{L_u\,:\,u\in\h, R_u\in\B\}$,
\item $\left(L_{1_g}u\right)(h)=u(g^{-1}h)$ e $\left(R_{1_g}u\right)(h)=u(hg^{-1})$ for all $u\in\ell^2$,
\item $L_{1_g}+L_{1_h}=L_{1_g+1_h}$ and $\alpha L_{1_g}=L_{\alpha 1_g}$ for all $\alpha\in\mathbb{C}$,
\item $L_{1_g}L_{1_h}=L_{1_g\ast 1_h}=L_{1_{hg^{-1}}}$ and $L_{1_g}^*=L_{1_{g^{-1}}}$,
\item $L_{1_e}=\unit$.
\end{enumerate}
Similar results hold for $R_{1_g}$.\\

Now, Proposition $6.7.4$ and Theorem $6.7.5$ in \cite{kadison} give the following
\begin{theorem}
The von Neumann algebras $\mathcal{L}_G$ and $\mathcal{R}_G$ are finite. Moreover, if $G\neq\{e\}$ and the conjugacy class of every $g\neq e$ is infinite, then $\mathcal{L}_G$ and $\mathcal{R}_G$ are factors of type $II_1$.
\end{theorem}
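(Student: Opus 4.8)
The plan is to equip each algebra with a faithful normal tracial state, deduce finiteness from it, and then compute the centre by hand under the infinite-conjugacy-class hypothesis. The natural object is the vector functional $\tau(x)=\scal{1_e}{x\,1_e}$ on $\mathcal{L}_G$. First I would check that $1_e$ is separating for $\mathcal{L}_G$: since $R_{1_g}1_e=1_g$ and $\{1_g\}_{g\in G}$ is an orthonormal basis, $1_e$ is cyclic for $\mathcal{R}_G=\mathcal{L}_G'$, hence separating for $\mathcal{L}_G$. It follows that $\tau$ is a faithful normal state, normality being automatic for a vector functional, $\tau(\unit)=\norm{1_e}^2=1$, and $\tau(x^*x)=\norm{x\,1_e}^2=0\Rightarrow x\,1_e=0\Rightarrow x=0$.

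The decisive algebraic fact is that $\tau$ is tracial. On the generating unitaries this is a one-line check: with $L_{1_g}1_e=1_g$ and the multiplication rule $L_{1_g}L_{1_h}=L_{1_{hg^{-1}}}$ one finds $\tau(L_{1_g}L_{1_h})=\delta_{g,h}=\tau(L_{1_h}L_{1_g})$. The main technical obstacle is upgrading $\tau(xy)=\tau(yx)$ from finite linear combinations of the $L_{1_g}$ (which are weak$^*$-dense in $\mathcal{L}_G$) to all of $\mathcal{L}_G$; I would do this by writing $\tau(xy)=\scal{x^*1_e}{y\,1_e}$ and $\tau(yx)=\scal{y^*1_e}{x\,1_e}$ and invoking normality of $\tau$ together with separate weak$^*$-continuity of the two sides under a Kaplansky-type approximation. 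Once $\tau$ is a faithful normal trace, finiteness is immediate: if $p=u^*u$ and $uu^*\le p$ in $\mathcal{L}_G$ then $\tau(p-uu^*)=\tau(u^*u)-\tau(uu^*)=0$ with $p-uu^*\ge 0$, so faithfulness forces $uu^*=p$; in particular $\unit$ is finite. The analogous trace $\scal{1_e}{\cdot\,1_e}$ on $\mathcal{R}_G$ (or the conjugation $J1_g=1_{g^{-1}}$, which exchanges the two algebras) gives finiteness of $\mathcal{R}_G$.

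For the second statement I would compute $\Z{\mathcal{L}_G}=\mathcal{L}_G\cap\mathcal{L}_G'=\mathcal{L}_G\cap\mathcal{R}_G$. Let $x$ lie in it and put $u=x\,1_e$. Applying the relation $L_{1_h}x=xL_{1_h}$ to $1_e$ and using $x\,1_h=xR_{1_h}1_e=R_{1_h}x\,1_e=R_{1_h}u$ (legitimate because $x\in\mathcal{L}_G$ commutes with $R_{1_h}\in\mathcal{R}_G=\mathcal{L}_G'$), the actions in item 2 give $u(h^{-1}k)=u(kh^{-1})$ for all $h,k$, that is $u(g)=u(hgh^{-1})$ for all $g,h$: the function $u$ is constant on conjugacy classes. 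This is where the hypothesis bites: as $u\in\ell^2(G)$ and every conjugacy class of a $g\neq e$ is infinite, $u$ must vanish away from $e$, so $u=u(e)1_e$ and therefore $x=u(e)\unit$ by item 5. Hence $\Z{\mathcal{L}_G}=\C\unit$, and since $\mathcal{L}_G$ and $\mathcal{R}_G$ have the same centre, both are factors. Finally, $G\neq\{e\}$ with infinite conjugacy classes forces $G$ infinite, so the orthonormal family $\{1_g=L_{1_g}1_e\}_{g\in G}$ shows that $\mathcal{L}_G$ is infinite-dimensional; a finite factor that is infinite-dimensional cannot be of type $I_n$, and having $\unit$ finite it is of type $II_1$. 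The same conclusions hold for $\mathcal{R}_G$ by symmetry.
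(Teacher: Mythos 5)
Your proof is correct. The paper does not prove this statement itself --- it quotes it from Kadison--Ringrose (Proposition 6.7.4 and Theorem 6.7.5 of \cite{kadison}) --- and your argument (the faithful normal tracial vector state $\tau=\scal{1_e}{\,\cdot\,1_e}$ giving finiteness, plus the computation that a central element's image $u=x1_e$ is a class function in $\ell^2(G)$, hence supported at $e$ under the infinite-conjugacy-class hypothesis) is exactly the standard one underlying that citation, with the one genuinely delicate step (extending traciality from the generators $L_{1_g}$ to all of $\mathcal{L}_G$) correctly identified and handled by normality plus Kaplansky density.
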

An example of discrete group satisfying the condition of the theorem above is the free (non abelian) group $\mathcal{F}_n$ on $n$ generators (with $n\neq 2$), i.e. the group of \lq\lq words\,\rq\rq\ formed from the $n$ generators (see Example $6.7.6$ in \cite{kadison}).\\

We finally are in position to introduce the QMS having as $\nT$ a type II factor.

\begin{example}{\rm We consider $G=\mathcal{F}_n$ for some finite number $n\geq 2$, and $\h=\ell^2(G)$ as before. For every $x\in\B$ we consider the bounded operator
$$\Ll(x)=\sum_{g\in G}L_{1_{g}}^*xL_{1_{g}}-x=\sum_{g\in G}L_{1_{g^{-1}}}xL_{1_{g}}-x,$$
where the sum over $g\in G$ is countable being $G$ a discrete group.\\
Since $\Ll$ is expressed in a GKSL form (recall that operators $L_{1_g}$ are unitary), it generates a uniformly continuous QMS $\T$ on $\B$, and it satisfies
$$\nT=\{L_{1_g}\,:\,g\in G\}^\prime=\mathcal{L}_G^\prime=\mathcal{R}_G,$$
for the Hamiltonian is a multiple of the identity operator.\\
We can then conclude that $\nT$ is a type $II_1$ factor. Moreover, by Theorem \ref{invariant-atomic}, we will say that $\T$ has not faithful normal invariant states.
}
\end{example}

\section{Integral decomposition of $\nT$}
{We are now interested to understand what happens whenever $\nT$ is not atomic: can we also obtain some decomposition of this algebra? And, in this case, such a decomposition forces operators $\{H, L_k\}_k$ in any GKSL representation to have a suitable structure?

}
We will show that every element of $\nT$ is decomposable with respect to a suitable "disintegration" of $\h$ in direct integral of Hilbert spaces.
Moreover, this structure of the decoherence-free algebra induces a decomposition of Lindblad operators $\{H,L_k\}_k$ with respect to the same direct integral decomposition.
\medskip

To this end, we can apply Theorem \ref{th:disintegration} in Appendix to the von Neumann algebra $\mathcal{R}:=\Z{\nT}^\prime$  obtaining a decomposition of $\h$ in the direct integral of Hilbert spaces $(\h_\gamma)_\gamma$ over a (locally compact complete separable metric) measure space $(\Gamma,\mu)$,
$$ \h=\int_\Gamma^\oplus\h_\gamma\,\de\mu(\gamma).$$ 
Moreover there exists a family of von Neumann algebras $(\mathcal{R}_\gamma)_{\gamma\in\Gamma}$ on $(\h_\gamma)_{\gamma\in\Gamma}$ such that $\Z{\nT}^\prime$ has the integral decomposition
$$\Z{\nT}^\prime=\int_\Gamma^\oplus\mathcal{R}_\gamma\,\de\mu(\gamma),$$
i.e. every element of $\Z{\nT}^\prime$ is a decomposable operator.

Now, we want make use of the same theorem taking $\mathcal{A}:=\Z{\nT}$ as subalgebra of $\Z{\mathcal{R}}$, so that $\Z{\nT}$ is the diagonal algebra with respect this decomposition.

So, as a first step, we have to prove the inclusion $\Z{\nT}\subseteq\Z{\mathcal{R}}$.

Actually, by following lemma, we can say something more.

\begin{lemma}\label{centro-commutante}
If $\mathcal M$ is a von Neumann algebra, then
$$\mathcal Z(\mathcal M) = \mathcal Z(\mathcal Z(\mathcal M)^\prime).$$
\end{lemma}
\begin{proof}
The equality can be shown by a direct computation. Indeed, by the definition of center we have that
\(\mathcal Z(\mathcal Z(\mathcal M)') = \mathcal{Z}(\mathcal{M})'' \cap \mathcal{Z}(\mathcal{M})'\).
But the double commutant theorem gives \(\mathcal{Z}(\mathcal{M})''=\mathcal{Z}(\mathcal{M})\),
so that \(\mathcal Z(\mathcal Z(\mathcal M)')=\mathcal{Z}(\mathcal{M})\cap\mathcal{Z}(\mathcal{M})'\).
Finally, since the center of a von Neumann algebra is always abelian, we obtain \(\mathcal{Z}(\mathcal{M})\cap\mathcal{Z}(\mathcal{M})'=\mathcal{Z}(\mathcal{Z}(\mathcal{M}))=\Z{\M}\).
\end{proof}

Therefore, if $\M=\nT$, the abelian algebra $\mathcal Z(\nT)$ coincides with $\mathcal Z(\Z{\nT}^\prime)=\Z{\mathcal{R}}$ and so, in particular, it is the maximal abelian subalgebra of $\mathcal Z(\Z{\nT}^\prime)$. By Theorem \ref{Mfactors} we have then the equality $\mathcal{R}_\gamma=\mathcal{B}(\h_\gamma)$ for almost every $\gamma\in\Gamma$.\\
These results are summed up in the following Proposition:
\begin{proposition}\label{NT-dec} There exists a (locally compact complete separable metric) measure space $(\Gamma,\mu)$ such that, up to unitary isomorphisms, $\h$ is the direct integral of Hilbert spaces $(\h_\gamma)_{\gamma\in\Gamma}$ over $(\Gamma,\mu)$ 
and $\mathcal Z(\nT)'$ is the algebra of all decomposable operators. In particular,
$$\Z{\nT}^\prime=\int_\Gamma^\oplus\mathcal{B}(\h_\gamma)\,\de\mu(\gamma).$$
Every element of $\nT$ is decomposable, i.e.
$$\nT=\int^\oplus_\Gamma \nT_\gamma\, \de\mu(\gamma)$$
where $\nT_\gamma$ are factors almost everywhere.\\
The center $\Z{\nT}$ of $\nT$ is the diagonal algebra with respect to this decomposition.
\end{proposition}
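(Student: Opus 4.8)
The plan is to derive the whole statement from two applications of the disintegration theorem (Theorem \ref{th:disintegration}) of the Appendix, with Lemma \ref{centro-commutante} serving as the bridge between them. Setting $\mathcal{R}:=\Z{\nT}'$, the lemma applied to $\M=\nT$ gives $\Z{\mathcal{R}}=\Z{\Z{\nT}'}=\Z{\nT}$, so that the abelian algebra $\Z{\nT}$ is exactly the center of $\mathcal{R}$. First I would disintegrate $\h$ with respect to $\mathcal{A}:=\Z{\nT}$, viewed as a subalgebra of $\Z{\mathcal{R}}$: since here $\mathcal{A}=\Z{\mathcal{R}}$, Theorem \ref{th:disintegration} produces the measure space $(\Gamma,\mu)$ and the field $(\h_\gamma)_\gamma$ with $\h=\int_\Gamma^\oplus\h_\gamma\,\de\mu(\gamma)$, identifies $\Z{\nT}$ with the diagonal algebra of this decomposition, and shows that $\mathcal{R}=\int_\Gamma^\oplus\mathcal{R}_\gamma\,\de\mu(\gamma)$ is decomposable.

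The second step is to identify the fibers $\mathcal{R}_\gamma$. Because $\Z{\nT}$ is the diagonal algebra, its commutant is precisely the algebra of all decomposable operators; as $\mathcal{R}=\Z{\nT}'$, this is exactly the content of Theorem \ref{Mfactors} once one notes that $\mathcal{A}=\Z{\nT}=\Z{\mathcal{R}}$ is maximal abelian in $\Z{\mathcal{R}}$. Hence $\mathcal{R}_\gamma=\mathcal{B}(\h_\gamma)$ for $\mu$-almost every $\gamma$, giving the claimed identity $\Z{\nT}'=\int_\Gamma^\oplus\mathcal{B}(\h_\gamma)\,\de\mu(\gamma)$. The decomposability of $\nT$ is then immediate: every element of $\nT$ commutes with its center $\Z{\nT}$, so $\nT\subseteq\Z{\nT}'=\mathcal{R}$, and since $\mathcal{R}$ consists exactly of the decomposable operators we may write $\nT=\int_\Gamma^\oplus\nT_\gamma\,\de\mu(\gamma)$ for a suitable measurable field of von Neumann algebras $(\nT_\gamma)_\gamma$.

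It remains to prove that the fibers $\nT_\gamma$ are factors almost everywhere, and this is where I expect the genuine work to lie. The idea is to disintegrate the center: one has $\Z{\nT}=\int_\Gamma^\oplus\Z{\nT_\gamma}\,\de\mu(\gamma)$, and since $\Z{\nT}$ is the diagonal algebra $\int_\Gamma^\oplus\C\,\unit_{\h_\gamma}\,\de\mu(\gamma)$, comparing the two representations forces $\Z{\nT_\gamma}=\C\,\unit_{\h_\gamma}$ for almost every $\gamma$, i.e.\ $\nT_\gamma$ is a factor. The main obstacle is precisely the justification of the interchange $\Z{\int^\oplus\nT_\gamma\,\de\mu}=\int^\oplus\Z{\nT_\gamma}\,\de\mu$, together with the essential uniqueness of the decomposable representation that lets one compare fibers pointwise; both are measurability statements from reduction theory, to be invoked from the Appendix. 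Accordingly, the care required is in checking the separability hypotheses on $\h$ and on $(\Gamma,\mu)$ under which those results apply, rather than in any algebraic manipulation, which reduces entirely to the commutant identities above.
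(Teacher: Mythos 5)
Your proposal is correct and follows essentially the same route as the paper: disintegrate $\h$ with respect to $\mathcal{A}=\Z{\nT}$, which Lemma \ref{centro-commutante} identifies with $\Z{\Z{\nT}^\prime}$, then invoke Theorem \ref{Mfactors} to get $\Z{\nT}^\prime=\int_\Gamma^\oplus\mathcal{B}(\h_\gamma)\,\de\mu(\gamma)$ and the factor criterion of Theorem \ref{th:disintegration} (equivalently, equation \eqref{int-center}) for the fibers $\nT_\gamma$. The step you flag as the ``genuine work'' (commuting the center with the direct integral) is precisely the content of those appendix statements, which the paper simply cites.
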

The decomposition of $\nT$ provided in proposition above is referred to as \lq\lq the central decomposition of $\nT$ (into factors)\rq\rq.

In this case there is no serious loss of generality if we think of $\Gamma$ as the unit interval, $\Gamma_c$, plus at most a countable number of atoms, $(\Gamma_n)_{n\in N}$. Therefore $\mu$ can be written as the sum of a \lq\lq continuous\rq\rq\ measure (Lebesgue measure on the unit interval) $\mu_{c}$ and discrete measures $(m_n)_{n\in N}$. So  we have that $$\h=\int_{\Gamma_c}^\oplus\h_\gamma\de\mu_c(\gamma)\oplus\left(\oplus_n\h_{\gamma_n}\right).$$ 
Now we can introduce some privileged projections. If we denote by $\h_{d}$ the discrete part of $\h$, i.e. $$\h_d:=\oplus_{m\in N}\h_{\gamma_m},$$ and by $\{e^m_{i}\}_{i\in I_m}$ an orthonormal basis of the (separable) Hilbert space $h_{\gamma_m}$, the family $\{e^m_{i}\}_{m\in N, i\in I_m}$ clearly gives an orthonormal basis of the $h_d$.
Defining orthogonal projections
\begin{equation}\label{projection-discret}
p_n=\kb{f_n}{f_n}\end{equation}
with $f_n=f_{n(m,i)}:=e^m_{i}$ for $n\in I:=\cup_{m\in N}I_m$,
and
\begin{equation}\label{projection-cont}
q:=\int_{{\Gamma_c}}^\oplus\unit_{\h_{\gamma}}\de\mu_c(\gamma),\end{equation}
we finally obtain a countable family $\{q, p_i\}_{i\in I}$ of mutually orthogonal and diagonal projections summing up to the identity, such that each $p_i$ is minimal in the center of $\nT$ and $p_i\nT p_i$ is a factor (see Proposition \ref{NT-dec}). 

Moreover, if $q\neq 0$, by the same proposition, $\Z{q\nT q}$ is the diagonal algebra with respect to the integral decomposition of $$\h_c:=\int_{\Gamma_c}^\oplus\h_\gamma\de\mu_c(\gamma).$$
Therefore, it is $*$-isomorphic to the multiplication algebra of $L^2(\Gamma_c,\mu_c)$ (see Examples $14.1.4(a)$ and $14.1.11(a)$), that does not possess minimal projections.

\begin{theorem}
The decoherence-free subalgebra $\nT$ can be decomposed as
\begin{equation}\label{NT-q-pi}
\nT=q\nT q\oplus\left(\oplus_{i\in I}\,p_i\nT p_i\right),
\end{equation}
where $\{q, p_i\}_{i\in I}$ is a countable family of mutually orthogonal projections in $\Z\nT$ summing {\red up} to the identity and satisfying the following properties:
\begin{enumerate}
\item $p_i\nT p_i$ is a factor,
\item $q\nT q$ either is zero or has diffuse center, i.e. without minimal projections.
\end{enumerate}
\end{theorem}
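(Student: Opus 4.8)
The plan is to assemble the theorem directly from the construction preceding it, since essentially all the work has already been done; what remains is to verify that the projections $\{q,p_i\}_{i\in I}$ have the three stated properties and that the decomposition \eqref{NT-q-pi} holds. First I would recall from Proposition \ref{NT-dec} that $\nT=\int_\Gamma^\oplus\nT_\gamma\,\de\mu(\gamma)$ with the $\nT_\gamma$ factors almost everywhere, and that $\Z{\nT}$ is the diagonal algebra for this decomposition. Writing $\Gamma=\Gamma_c\sqcup(\bigsqcup_{m\in N}\Gamma_m)$ as in the discussion above, with $\mu=\mu_c+\sum_n m_n$, splits the Hilbert space as $\h=\h_c\oplus\h_d$, where $\h_c=\int_{\Gamma_c}^\oplus\h_\gamma\,\de\mu_c(\gamma)$ and $\h_d=\oplus_{m\in N}\h_{\gamma_m}$. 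The projections $q$ and $(p_i)_{i\in I}$ defined in \eqref{projection-cont} and \eqref{projection-discret} are diagonal, mutually orthogonal, and sum to $\unit$; since every diagonal operator lies in the diagonal algebra $\Z{\nT}$, these projections belong to the center.

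Next I would establish the orthogonal sum \eqref{NT-q-pi}. Because $q,(p_i)_i$ are mutually orthogonal central projections summing to $\unit$, for any $x\in\nT$ we have $x=\sum_{z,w\in\{q\}\cup\{p_i\}} zxw$, and centrality forces all off-diagonal terms $zxw$ (with $z\neq w$) to vanish, since $zxw=xzw=0$. Thus $x=qxq+\sum_i p_ixp_i$, giving $\nT=q\nT q\oplus(\oplus_i p_i\nT p_i)$ as an algebraic direct sum of corners; each corner $p_i\nT p_i$ and $q\nT q$ is again a von Neumann algebra acting on the respective range.

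For property (1), I would argue that $p_i$ corresponds to an atom $\{\gamma_{m(i)}\}$ of $(\Gamma,\mu)$ carrying a one-dimensional fiber contribution in the relevant sense, so that $p_i\nT p_i\cong\nT_{\gamma}$ for the corresponding $\gamma$, which is a factor by Proposition \ref{NT-dec}; equivalently, since $p_i$ is a minimal projection of $\Z{\nT}$ (as noted in the construction, each $p_i$ is minimal in the center), the corner $p_i\nT p_i$ has trivial center $\Z{p_i\nT p_i}=p_i\Z{\nT}p_i=\C p_i$ and is therefore a factor. For property (2), if $q\neq0$ then by Proposition \ref{NT-dec} the center $\Z{q\nT q}=q\Z{\nT}q$ is the diagonal algebra of the continuous part, which the preceding paragraph identifies with the multiplication algebra of $L^2(\Gamma_c,\mu_c)$; this algebra possesses no minimal projections because $\mu_c$ is non-atomic (Lebesgue measure on the unit interval), so $\Z{q\nT q}$ is diffuse.

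The main obstacle I anticipate is not conceptual but bookkeeping: one must be careful that the abstract atoms of the measure space $(\Gamma,\mu)$ correspond exactly to minimal central projections, and that the fibers $\h_{\gamma_m}$ over atoms are genuinely separated from the continuous part so that the identification $\Z{q\nT q}\cong L^\infty(\Gamma_c,\mu_c)$ (acting by multiplication on $L^2(\Gamma_c,\mu_c)$) is legitimate. This requires invoking the direct-integral machinery of the Appendix to guarantee that the decomposition of $\mu$ into its continuous and atomic parts is reflected faithfully in the corresponding decomposition of the diagonal algebra, and that the absence of minimal projections in $L^\infty(\Gamma_c,\mu_c)$ follows from non-atomicity of $\mu_c$. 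Once these measure-theoretic identifications are in place, the three properties follow immediately from the construction.
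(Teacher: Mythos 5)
Your proposal is correct and follows essentially the same route as the paper, which in fact prints no separate proof: the theorem is stated as a summary of the construction of $q$ and the $p_i$ in the preceding paragraphs, and you reassemble exactly those ingredients (mutual orthogonality of the diagonal projections and the vanishing of off-diagonal corners by centrality, minimality of each $p_i$ in $\Z{\nT}$ giving $\Z{p_i\nT p_i}=\C p_i$, and the identification of $\Z{q\nT q}$ with the multiplication algebra of $L^2(\Gamma_c,\mu_c)$, which is diffuse since $\mu_c$ is non-atomic). The only caveat is that your opening claim that the projections of \eqref{projection-discret} are diagonal should be read with the $p_i$ interpreted as the fiber projections $\unit_{\h_{\gamma_m}}$ over the atoms of $\mu$ --- which is what your ``minimal central projection'' argument actually uses --- rather than literally as the rank-one projections $\kb{f_n}{f_n}$, which lie in $\Z{\nT}$ only when the corresponding fiber is one-dimensional.
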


\subsection{Structure of the infinitesimal generator induced by the decoherence-free subalgebra}

In this section, we want to deduce a structure theorem for the infinitesimal generator, induced by the integral decomposition introduced in the previous section.

We recall a preliminary result (see Corollary $9.3.5$ in \cite{kadison}), in order to prove that every operator in the center of $\nT$ is a fixed point for the semigroup. This result was known in the case in which $\nT$ is atomic (Proposition 2.5 in \cite{DFSU}). Now we show that is true in a more generale framework.

\begin{lemma}\label{inner}
Let $\alpha:\M\to\M$ be a $*$-automorphism of type I von Neumann algebras. If $\alpha$ preserves $\Z{\M}$ then $\alpha$ is inner, that is there exists a unitary operator $U\in\M$ such that $\alpha(x)=UxU^*$ for all $x\in\M$.
\end{lemma}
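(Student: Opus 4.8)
The plan is to reduce to the homogeneous case and then transport a system of matrix units of $\M$ onto its image under $\alpha$ by a unitary that lies \emph{in $\M$ itself}. First a reading of the hypothesis: since every automorphism maps the center onto itself, ``$\alpha$ preserves $\Z{\M}$'' can only be a genuine assumption if it means $\alpha(z)=z$ for all $z\in\Z{\M}$; this is also the version compatible with the conclusion, because $\mathrm{Ad}(U)$ with $U\in\M$ fixes $\Z{\M}$ pointwise. In particular $\alpha$ preserves central supports, $z_{\alpha(p)}=\alpha(z_p)=z_p$. Now I would write $\M=\bigoplus_n\M_n$ as the direct sum of its homogeneous type $I_n$ summands, whose units $z_n$ are mutually orthogonal central projections summing to $\unit$; as each $z_n$ is fixed, $\alpha$ restricts to a center-fixing automorphism of each $\M_n$, and the sought unitary can be assembled as $U=\bigoplus_n U_n$. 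So I may assume $\M$ homogeneous, $\M\cong\Z{\M}\,\overline{\otimes}\,\mathcal{B}(\h_0)$, and fix a system of matrix units $\{e_{ij}\}_{i,j\in J}$ ($|J|=\dim\h_0$) with $e_{ij}e_{kl}=\delta_{jk}e_{il}$, $e_{ij}^*=e_{ji}$, $\sum_i e_{ii}=\unit$, such that every $x\in\M$ is of the form $x=\sum_{ij}z_{ij}e_{ij}$ with $z_{ij}\in\Z{\M}$.

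The heart of the matter is to build the implementing unitary. Put $f_{ij}:=\alpha(e_{ij})$, a second system of matrix units. Each $e_{ii}$ is an abelian projection of central support $\unit$, and therefore so is $f_{ii}=\alpha(e_{ii})$, with $z_{f_{ii}}=\alpha(z_{e_{ii}})=\unit$. By the comparison theory of type I algebras, two abelian projections with the same central support are Murray--von Neumann equivalent \emph{within} $\M$, so there is a partial isometry $w\in\M$ with $w^*w=e_{11}$ and $ww^*=f_{11}$. I then set
\[
U:=\sum_{i\in J} f_{i1}\,w\,e_{1i},
\]
which converges strongly because its partial sums are partial isometries with initial spaces $e_{ii}\h$ and final spaces $f_{ii}\h$, pairwise orthogonal and increasing to $\unit$. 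Using $w^*w=e_{11}$ and $ww^*=f_{11}$ one checks $U^*U=\sum_i e_{ii}=\unit$ and $UU^*=\sum_i f_{ii}=\unit$, so $U\in\M$ is unitary, and the relation $U e_{kl}=f_{kl}U$ gives $U e_{kl}U^*=\alpha(e_{kl})$ for all $k,l$.

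It remains to pass from matrix units to arbitrary elements. For $x=\sum_{ij}z_{ij}e_{ij}$ with $z_{ij}\in\Z{\M}$, the central elements $z_{ij}$ are fixed by $\alpha$ and commute with $U$, whence
\[
\alpha(x)=\sum_{ij}z_{ij}\,\alpha(e_{ij})=\sum_{ij}z_{ij}\,U e_{ij}U^*=U\Big(\sum_{ij}z_{ij}e_{ij}\Big)U^*=UxU^*.
\]
Thus $\alpha=\mathrm{Ad}(U)$ with $U\in\M$, which is the assertion.

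I expect the genuine obstacle to be precisely the step that forces the implementing unitary into $\M$ and not merely into $\mathcal{B}(\h)$: this is exactly what the equivalence of abelian projections with equal central support supplies, and it is the point at which the type I hypothesis is indispensable (the statement fails for type II and type III algebras, which admit center-fixing outer automorphisms). The remaining ingredients---the matrix-unit/tensor decomposition in the homogeneous case and the strong convergence of $U$ when $|J|$ is infinite (here $\h$ is separable, so $J$ is countable)---are standard, but must be in place for the computation to be legitimate.
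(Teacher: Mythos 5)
Your proof is correct, but there is nothing in the paper to measure it against: the authors give no proof of Lemma \ref{inner}, they simply cite Corollary $9.3.5$ of Kadison--Ringrose. Your argument is essentially the standard textbook proof of that corollary: split off the homogeneous type $I_n$ summands (whose central units are fixed by $\alpha$), use the equivalence \emph{inside} $\M$ of two abelian projections with the same central support to transport a system of matrix units onto its image under $\alpha$, and assemble the implementing unitary $U=\sum_i f_{i1}we_{1i}\in\M$. All the individual steps check out: the description $x=\sum_{ij}z_{ij}e_{ij}$ of a homogeneous algebra over its center, the computation $U^*U=UU^*=\unit$ and $Ue_{kl}=f_{kl}U$, the strong convergence of the partial sums of $U$, and the (implicitly used) normality of $\alpha$ needed to push it through infinite sums. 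Your preliminary remark on the hypothesis is also well taken and is the one genuinely substantive point: read setwise, ``$\alpha$ preserves $\Z{\M}$'' holds for every automorphism and the lemma would then be false (any nontrivial automorphism of the abelian, hence type I, algebra $L^\infty[0,1]$ is outer, since inner automorphisms of an abelian algebra are trivial); the pointwise reading $\alpha(z)=z$ is exactly the hypothesis in Kadison--Ringrose. Be aware, however, that under this (correct) reading the later invocation of the lemma in Proposition \ref{prop:znt} --- where it is applied to the abelian algebra $\M=\Z{\nT}$, so that fixing the center pointwise is precisely the conclusion being sought --- needs to be reexamined; that is an issue with the application in the paper, not with your proof.
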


\begin{proposition}\label{prop:znt}
The restriction of every $\T_t$ to $\mathcal Z(\nT)$ is a $*$-automorphism. In particular we have $\mathcal Z(\nT)\subseteq\FT$.
\end{proposition}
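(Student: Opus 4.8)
The plan is to prove the two assertions in turn, the first being essentially formal and the second being where the work lies. First I would record that $\T_t$ restricts to a $*$-automorphism of $\nT$ --- this is the defining property of the decoherence-free subalgebra recalled after \eqref{eq:NT-def}. Since any $*$-automorphism of a von Neumann algebra carries its center onto its center (the center being characterized purely in terms of the algebra structure), we get $\T_t(\Z{\nT})=\Z{\nT}$, so that $\T_t|_{\Z{\nT}}$ is a $*$-automorphism of the abelian algebra $\Z{\nT}$. This already yields the first statement.

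For the inclusion $\Z{\nT}\subseteq\FT$ I would exploit uniform continuity. By item $4$ of Proposition \ref{prop-struct-NT}, on $\nT$ (hence on $\Z{\nT}$) the semigroup acts as $\T_t(x)=\hbox{e}^{\mi tH}x\,\hbox{e}^{-\mi tH}$, so $(\beta_t)_{t\ge 0}:=(\T_t|_{\Z{\nT}})_{t\ge 0}$ is a uniformly continuous one-parameter group of $*$-automorphisms of $\Z{\nT}$. Its generator is therefore a bounded $*$-derivation $\delta=\Ll|_{\Z{\nT}}$ of $\Z{\nT}$ into itself; concretely $\delta(z)=\mi[H,z]$, obtained by differentiating the previous identity at $t=0$, and $\delta$ indeed maps $\Z{\nT}$ into $\Z{\nT}$ because $\beta_t$ preserves $\Z{\nT}$. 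Since $\Z{\nT}$ is abelian, every bounded derivation of it is inner and hence zero (equivalently, a derivation of a commutative C$^*$-algebra vanishes). Thus $\delta=0$, so $\beta_t=\exp(t\delta)=\mathrm{id}$ for every $t$, which is exactly $\Z{\nT}\subseteq\FT$.

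I expect the main obstacle to be precisely the point distinguishing this statement from the atomic case: a priori a $*$-automorphism of the (possibly diffuse) abelian algebra $\Z{\nT}$ need not be trivial, as it could implement a nontrivial one-parameter group of transformations of the base space $\Gamma$ of Proposition \ref{NT-dec}. What rules this out is the uniform continuity of $\T$, which forces the generator to be a bounded derivation; the innerness of such derivations, together with commutativity of the center, collapses the flow to the identity. If one prefers to route the argument through Lemma \ref{inner}, one can instead work on the type I algebra $\mathcal{R}=\Z{\nT}'$, which by Lemma \ref{centro-commutante} and Proposition \ref{NT-dec} satisfies $\Z{\mathcal{R}}=\Z{\nT}$ and on which $x\mapsto\hbox{e}^{\mi tH}x\hbox{e}^{-\mi tH}$ restricts to a uniformly continuous automorphism group (it carries $\nT$ onto $\nT$, hence $\Z{\nT}$ onto $\Z{\nT}$, hence $\mathcal{R}$ onto $\mathcal{R}$); its generator is again a bounded derivation, hence inner, so it equals $\mathrm{Ad}(U_t)$ with $U_t\in\mathcal{R}$ unitary, and every $z\in\Z{\mathcal{R}}=\Z{\nT}$ commutes with $U_t$ and is therefore fixed. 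Either way, the single essential ingredient beyond formal manipulation is that uniform continuity turns the automorphism group on the center into the trivial one.
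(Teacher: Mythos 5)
Your proof is correct, and while the first half coincides with the paper's argument, the second half takes a genuinely different route. For the invariance of the center you invoke the general fact that a $*$-automorphism maps the center onto the center; the paper verifies the same thing by hand, using the surjectivity of $\T_t|_{\nT}$ to write an arbitrary $y\in\nT$ as $\T_t(z_t)$ — the content is identical. For the key inclusion $\Z{\nT}\subseteq\FT$, the paper fixes $t$ and applies Lemma \ref{inner} (Kadison--Ringrose, Cor.~9.3.5) to the abelian, hence type I, algebra $\Z{\nT}$, concluding that $\T_t|_{\Z{\nT}}$ is inner and therefore trivial. You instead differentiate: the restricted semigroup is norm continuous, its generator $\delta=\mi[H,\cdot\,]$ is a bounded $*$-derivation of the abelian von Neumann algebra $\Z{\nT}$ into itself, and every such derivation vanishes (it is inner, hence zero in an abelian algebra), so the semigroup exponentiates to the identity. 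Your route leans on uniform continuity in an essential way, whereas the paper's argument, if taken at face value, would apply to a single automorphism; what your route buys is robustness. Indeed, the standard form of Lemma \ref{inner} requires the automorphism to fix the center \emph{elementwise}, not merely to map it onto itself (inner automorphisms always fix the center, and diffuse abelian algebras such as $L^\infty[0,1]$ admit plenty of nontrivial, hence non-inner, automorphisms), so its application to $\Z{\nT}=\Z{\Z{\nT}}$ is delicate and comes close to presupposing the conclusion. Your derivation argument sidesteps that point entirely and is, in the uniformly continuous setting, arguably the cleaner proof.
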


\begin{proof}
Since $\T_t$ acts a $*$-automorphism onto $\nT$, it is enough to show that its restriction to $\Z{\nT}$ is bijective.\\
So, let $x\in \mathcal Z(\nT)$ and $y\in\nT$; then there exists $z_t\in \nT$ such that $y=\T_t(z_t)$, and thus
$$\T_t(x)y=\T_t(x)\T_t(z_t)=\T_t(x z_t)=\T_t(z_tx)=y\T_t(x),$$
i.e. $\T_t(x)$ belongs to $\mathcal Z(\nT)$.

Viceversa, if $y\in \mathcal Z(\nT)$, in particular there exists $x\in \nT$ such that $\T_t(x)=y$, i.e $x=\mathrm{e}^{-itH}y\mathrm{e}^{itH}$ and so for every $z\in \nT$
$$zx=z\mathrm{e}^{-itH}y\mathrm{e}^{itH}=\mathrm{e}^{-itH}\T_t(z)y\mathrm{e}^{itH}=\mathrm{e}^{-itH}y\T_t(z)\mathrm{e}^{itH}=xz.$$
This means $x\in \mathcal Z(\nT)$.

In order to conclude the proof we have to show that every $x$ in $\Z{\nT}$ is a fixed point.\\
Since the restriction of $\T_t$ to $\Z{\nT}$ is a $*$-automorphism on a type I algebra coinciding with its center (being $\Z{\nT}$ commutative), Lemma \ref{inner} gives $\T_t(x)=UxU^*$ for all $x\in\Z{\nT}$, with $U$ a unitary operator in $\Z{\nT}$.
Therefore, the equality $\T_t(x)=x$ holds for all $x\in\Z{\nT}$.
\end{proof}

This result allows to provide the desired decomposition of Lindblad operators $\{H,L_k\}_k$ and it extends item 1 in Theorem \ref{th:main} in not atomic case.

\begin{theorem}
{In any GKSL representations of the generator $\Ll$ of $\T$}, the Lindblad operators $\{H,L_k\}_k$ are decomposable. More precisely, for almost every $\gamma\in \Gamma$ there exist $H_\gamma=H_\gamma^*$ and $(L_{k,\gamma})_k$ in $\mathcal B(\h_\gamma)$  such that
$$H=\int_\Gamma^\oplus H_\gamma\, \de\mu(\gamma)$$
$$L_k =\int_\Gamma^\oplus L_{k,\gamma}\, \de\mu(\gamma).$$
\end{theorem}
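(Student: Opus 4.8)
The goal is to show that in any GKSL representation, the Lindblad operators $H$ and $(L_k)_k$ are decomposable with respect to the direct integral decomposition from Proposition \ref{NT-dec}. The key structural fact established just before this statement is Proposition \ref{prop:znt}: every element of $\Z{\nT}$ is a fixed point of the semigroup $\T$, i.e. $\Z{\nT}\subseteq\FT$. The plan is to translate "$x$ is fixed for all $t$" into a statement at the level of the generator $\Ll$ and then into a commutation relation with the Lindblad operators, and finally to invoke the characterization of decomposable operators as the commutant of the diagonal algebra.

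First I would recall the definition of decomposability in terms of direct integrals (the appendix material): an operator $T\in\B$ is decomposable if and only if it commutes with every operator in the diagonal algebra, which here is precisely $\Z{\nT}$ by Proposition \ref{NT-dec}. So the entire theorem reduces to proving the single commutation fact
$$[H,z]=0\quad\text{and}\quad [L_k,z]=0\qquad\forall\,z\in\Z{\nT},\ \forall\,k.$$
Once this is in hand, the existence of the fibered families $H_\gamma=H_\gamma^*$ and $(L_{k,\gamma})_k$ in $\mathcal B(\h_\gamma)$, together with the measurability needed to form the direct integrals, is exactly the content of the disintegration theorems cited in the appendix, and the self-adjointness of each $H_\gamma$ follows because $H=H^*$ is decomposable and the adjoint acts fiberwise.

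To obtain the commutation relations I would proceed as follows. Take $z\in\Z{\nT}$. By Proposition \ref{prop:znt} we have $\T_t(z)=z$ for all $t\ge 0$, hence differentiating at $t=0$ gives $\Ll(z)=0$. The harder point is to pass from the vanishing of $\Ll(z)$ to the individual commutators. Here I would exploit item 5 of Proposition \ref{prop-struct-NT}, which states that $\nT$ is the commutant of the set $\{\delta_H^{n}(L_\ell),\delta_H^{n}(L_\ell^*):n\ge 0,\ell\ge 1\}$. Since $z\in\Z{\nT}\subseteq\nT$, the element $z$ commutes with every operator in that generating set; in particular, taking $n=0$, we get $[L_\ell,z]=0$ and $[L_\ell^*,z]=0$ for all $\ell$, and taking the commutation with $\delta_H(L_\ell)=[H,L_\ell]$ together with the already-known $[L_\ell,z]=0$ lets me extract $[H,z]=0$ on the relevant domain. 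The delicate step is the Hamiltonian: $z$ lies in $\nT$, which by item 4 of Proposition \ref{prop-struct-NT} is contained in the fixed-point algebra of the automorphism $x\mapsto e^{\mi tH}xe^{-\mi tH}$; applied to $z\in\Z{\nT}\subseteq\FT$ this yields $e^{\mi tH}ze^{-\mi tH}=\T_t(z)=z$, and differentiating gives $[H,z]=0$ directly, which is cleaner than going through the iterated commutators.

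The main obstacle I anticipate is not the algebra of commutators but the functional-analytic bookkeeping: $H$ and the $L_\ell$ are merely bounded (with only strong convergence of $\sum_\ell L_\ell^*L_\ell$), so one must ensure the differentiation $\frac{d}{dt}\T_t(z)|_{t=0}=\Ll(z)$ is legitimate (this is guaranteed by uniform continuity) and that the commutation relations, once established against the dense/generating data, indeed characterize membership in the commutant of the full diagonal algebra $\Z{\nT}$ rather than just of a weak$^*$-dense subset. Since $\Z{\nT}$ is a von Neumann algebra and the set of operators commuting with it is weak$^*$-closed, establishing $[H,z]=[L_\ell,z]=0$ for all $z$ in the generating data suffices, and then the appendix disintegration theorem delivers the fiberwise decomposition $H=\int_\Gamma^\oplus H_\gamma\,\de\mu(\gamma)$ and $L_k=\int_\Gamma^\oplus L_{k,\gamma}\,\de\mu(\gamma)$, completing the proof.
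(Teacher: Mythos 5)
Your proposal is correct and follows essentially the same route as the paper: both arguments reduce the theorem to showing that $H$ and the $L_k$ lie in $\Z{\nT}^\prime$, which by Proposition \ref{NT-dec} is exactly the algebra of decomposable operators, using that $\nT$ commutes with the $L_k$ (item 5 of Proposition \ref{prop-struct-NT}) and that $\Z{\nT}\subseteq\FT$ forces commutation with $H$. The only cosmetic difference is that you obtain $[H,z]=0$ by combining item 4 of Proposition \ref{prop-struct-NT} with Proposition \ref{prop:znt} (so that $\mathrm{e}^{\mi tH}z\mathrm{e}^{-\mi tH}=\T_t(z)=z$ and one differentiates), whereas the paper invokes the standard fact that fixed projections commute with all the Lindblad operators; your variant is, if anything, slightly more self-contained.
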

\begin{proof}
We know that $\nT$ is contained in the commutant of $L_k$ and $L_k^*$, so that $L_k$ and $L_k^*$ belong to $\nT^\prime$. Since $\nT$ clearly contains the diagonal algebra $\Z{\nT}$ (see Proposition \ref{NT-dec}), its commutant $\nT^\prime$ is also decomposable (see equation \eqref{int-commutant} in Appendix applied to $\mathcal{R}=\nT$ in Proposition $14.1.24$ of \cite{kadison}).
On the other hand, by Proposition~\ref{prop:znt} the von Neumann algebra $\Z{\nT}$ is contained in the set of fixed points $\FT$, and so every projection of it commutes with the Lindbald operators $L_k$ and $H$. Consequently $H$ belongs to $\Z{\nT}^\prime$, since $\Z{\nT}$ is generated by its projections.
\end{proof}

For almost every $\gamma\in\Gamma$ we can then define on $\mathcal{B}(\h_\gamma)$ the uniformly continuous QMS $\T^\gamma$ whose generator $\Ll^\gamma$ is given by the Lindblad operators $\{H_\gamma, L_{k,\gamma}\}_k$.
In particular note that, for $x=\int_\Gamma^\oplus x(\gamma)\, \de\mu(\gamma)$ in $\nT$, $x(\gamma)\in\nT_\gamma$, we have
\begin{align*}\Ll(x)&=\mi\int_\Gamma^\oplus [H_\gamma,x(\gamma)] \,\de\mu(\gamma)\\
&-\frac{1}{2}\sum_k\int_\Gamma^\oplus\left( L_{k,\gamma}^*L_{k,\gamma}x(\gamma)-2L_{k,\gamma}^*x(\gamma)L_{k,\gamma} + x(\gamma)L_{k,\gamma}^*L_{k,\gamma}\right)\,\de\mu(\gamma)\\
&=\int_\Gamma^\oplus\Ll^\gamma(x(\gamma))\,\de\mu(\gamma).
\end{align*}
Now, recalling that $\Ll(x)$ is diagonalizable since it belongs to $\nT$, we get $\Ll(x)_\gamma=\Ll^\gamma(x(\gamma))$ for almost all $\gamma\in\Gamma$.\\
This means that 
\begin{equation}\label{T-gamma}
\T_t(x)_\gamma=\T^\gamma (x(\gamma))\qquad\forall\,x=\int_\Gamma^\oplus x(\gamma)\,\de\mu(\gamma)\in\nT
\end{equation}
and for almost every $\gamma$.

Thanks to this equation we immediately obtain the following result.
\begin{corollary}
Let be $\nT=\int_\Gamma^\oplus \nT_\gamma \,\de\mu(\gamma)$ the decomposition of $\nT$ in direct integrals. Then $$\nT_\gamma=\mathcal N(\T^\gamma)$$ for almost every $\gamma$, {where $\mathcal N(\T^\gamma)$ denotes the decoherence-free subalgebra of $\T^\gamma$.}
\end{corollary}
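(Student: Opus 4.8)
The plan is to establish the two inclusions $\nT_\gamma\subseteq\mathcal N(\T^\gamma)$ and $\mathcal N(\T^\gamma)\subseteq\nT_\gamma$ separately, for almost every $\gamma$, the main tool being the fibering identity \eqref{T-gamma}. First I would note that \eqref{T-gamma} in fact holds for \emph{every} decomposable $x=\int_\Gamma^\oplus x(\gamma)\,\de\mu(\gamma)$, not only for $x\in\nT$: since $H$ and the $L_k$ are decomposable, $\Ll$ sends decomposable operators to decomposable operators with $\Ll(x)_\gamma=\Ll^\gamma(x(\gamma))$ (commutators and products of decomposable operators are decomposable, and the algebra of decomposable operators is strongly closed), and passing to $\T_t=\mathrm e^{t\Ll}$ through its norm-convergent series gives $\T_t(x)_\gamma=\T_t^\gamma(x(\gamma))$ almost everywhere. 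This upgraded version of \eqref{T-gamma} is what makes both inclusions run.

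For $\nT_\gamma\subseteq\mathcal N(\T^\gamma)$, take $x\in\nT$ and decompose the defining identities $\T_t(x^*x)=\T_t(x)^*\T_t(x)$ and $\T_t(xx^*)=\T_t(x)\T_t(x)^*$ fiberwise. Since involution and product act fiberwise and $\T_t(x)\in\nT$, applying \eqref{T-gamma} to both sides yields $\T_t^\gamma(x(\gamma)^*x(\gamma))=\T_t^\gamma(x(\gamma))^*\T_t^\gamma(x(\gamma))$, and the analogue for $x(\gamma)x(\gamma)^*$, for a.e.\ $\gamma$. Letting $t$ range over the rationals and using the uniform continuity of each $\T^\gamma$, these identities persist for all $t\ge0$ outside a single null set, so $x(\gamma)\in\mathcal N(\T^\gamma)$. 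Choosing a countable family in $\nT$ whose fibers generate $\nT_\gamma$ for a.e.\ $\gamma$ and discarding the (countable) union of the associated null sets gives $\nT_\gamma\subseteq\mathcal N(\T^\gamma)$ almost everywhere.

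For the reverse inclusion I would argue by the maximality of $\nT$. Let $y$ be decomposable with $y(\gamma)\in\mathcal N(\T^\gamma)$ a.e. By the upgraded \eqref{T-gamma} I can compute the fibers of $\T_t(y^*y)$ and of $\T_t(y)^*\T_t(y)$, and the membership $y(\gamma)\in\mathcal N(\T^\gamma)$ forces them to agree a.e.; hence $\T_t(y^*y)=\T_t(y)^*\T_t(y)$ and, symmetrically, $\T_t(yy^*)=\T_t(y)\T_t(y)^*$ for all $t$. By \eqref{eq:NT-def} this means $y\in\nT$, so $\int_\Gamma^\oplus\mathcal N(\T^\gamma)\,\de\mu(\gamma)\subseteq\nT=\int_\Gamma^\oplus\nT_\gamma\,\de\mu(\gamma)$ and therefore $\mathcal N(\T^\gamma)\subseteq\nT_\gamma$ a.e. Alternatively, one can bypass the measurability of $\gamma\mapsto\mathcal N(\T^\gamma)$ using the commutant description of Proposition~\ref{prop-struct-NT}(5): $\nT'=\{\delta_H^n(L_\ell),\delta_H^n(L_\ell^*)\}''$ contains the diagonal algebra $\Z{\nT}=\nT\cap\nT'$, its generators are decomposable with fibers $\delta_{H_\gamma}^n(L_{\ell,\gamma}),\delta_{H_\gamma}^n(L_{\ell,\gamma}^*)$, and the direct-integral theory of commutants and of generated algebras (cf.\ \eqref{int-commutant}) identifies the fibers $\nT_\gamma'$ of $\nT'$ with $\mathcal N(\T^\gamma)'$ for a.e.\ $\gamma$, whence $\nT_\gamma=\mathcal N(\T^\gamma)$.

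The fiberwise algebraic manipulations are routine; the delicate point, and the one I would be most careful about, is the direct-integral bookkeeping. Concretely: that the fiber identities can be arranged to hold simultaneously for all $t\ge0$ and for an entire countable generating family outside a \emph{single} $\mu$-null set; that $\gamma\mapsto\mathcal N(\T^\gamma)$ is a measurable field of von Neumann algebras, so that its direct integral is well defined and recovers the corresponding global algebra; and, in the commutant variant, that forming the von Neumann algebra generated by a countable decomposable family together with the diagonal commutes with direct integration. Each of these is standard in the theory of decomposable algebras, but it is precisely where the argument must be pinned down.
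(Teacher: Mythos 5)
Your proposal is correct and follows essentially the same route as the paper: the inclusion $\nT_\gamma\subseteq\mathcal N(\T^\gamma)$ is obtained exactly as in the text, by decomposing the defining identities $\T_t(x^*x)=\T_t(x)^*\T_t(x)$ and $\T_t(xx^*)=\T_t(x)\T_t(x)^*$ fiberwise via \eqref{T-gamma}. You are in fact more careful than the paper on the null-set bookkeeping (rational $t$ plus continuity, a countable generating family) and you supply genuine arguments (maximality of $\nT$, or the commutant description of Proposition~\ref{prop-struct-NT}) for the reverse inclusion, which the paper simply declares trivial.
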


\begin{proof}
Let $x\in \nT$ with decomposition $x=\int_\Gamma^\oplus x(\gamma)\, \de\mu(\gamma)$. Since also $x^*x$ belongs to $\nT$ and $x^*x=\int_\Gamma^\oplus x(\gamma)^*x(\gamma)\,\de\mu(\gamma)$, equation \ref{T-gamma} and the equality $\T_t(x^*x)=\T_t(x^*)\T_t(x)$ give
$$\int_\Gamma^\oplus \T^\gamma(x^*(\gamma)x(\gamma))\,\de\mu(\gamma)=\int_\Gamma^\oplus \T^\gamma(x^*(\gamma))\T_t(x(\gamma))\,\de\mu(\gamma).$$
This means $\T^\gamma(x^*(\gamma)x(\gamma))=\T^\gamma(x^*(\gamma))\T_t(x(\gamma))$ for almost every $\gamma$. In a similar way we obtain $\T^\gamma(x(\gamma)x^*(\gamma))=\T^\gamma(x(\gamma))\T_t(x^*(\gamma))$ for almost every $\gamma$, and so $x(\gamma)\in \mathcal N(\T^\gamma)$ for almost every $\gamma$. This prove the inclusion $\nT_\gamma\subseteq \mathcal N(\T^\gamma)$. 

Viceversa it is trivial. \end{proof}

\section{$\nT$ atomic}
We recall that the decoherence-free subalgebra $\nT$ is strongly related to the property of environmental decoherence. (see e.g. \cite{Hellmich,CSU2,CSU-new,DFSU,FSU-NtFt}).
Indeed, given $\T$ a uniformly continuous QMS on $\B$, there is \emph{environment induced decoherence}
(EID) on the open system described by $\T$ if there exists a $\T_t$-invariant and
$*$-invariant weakly$^*$ closed subspace $\M_2$ of $\B$ such that:
 \begin{itemize}
 \item[(EID1)] $\B=\nT\oplus\M_2$ with $\M_2\not=\{0\}$,
 \item[(EID2)] $w^*-\lim_{t\to\infty}\T_t(x)=0$ for all $x\in\M_2$.
 \end{itemize}
Unfortunately, if $\h$ is infinite-dimensional, it is not clear when such a decomposition exists and, in the case, if the space $\M_2$ is uniquely determined. However,
 $\mathcal{M}_2$ is always contained in the $\T_t$-invariant and
$*$-invariant closed subspace
$$
\M_0=\left\{\, x\in\B\,:\,w^*-\lim_{t\to\infty}\T_t(x)=0\,\right\}.
$$
\smallskip

Since the decomposition $\B=\nT\oplus\M_2$ is clearly related to the asymptotic properties of the semigroup, it is very natural to compare it with another famous asymptotic decomposition of $\B$, called the \emph{Jacobs-de Leeuw-Gliksberg splitting}.

We recall that, when \underline{there exists a faithful normal invariant state} $\rho$, the Jacobs-de Leeuw-Gliksberg
splitting holds (see e.g. Corollary $3.3$ and Proposition $3.3$ in \cite{Hellmich}) giving $\B=\mathfrak{M}_r\oplus\mathfrak{M}_s$ with
\begin{align}
\mathfrak{M}_r&:=\overline{\spanno}^{w^*}\{x\in\B\,:\,
\T_t(x)=e^{\mi t\lambda}x\ \mbox{for some $\lambda\in\mathbb{R}$},\ \forall\,t\geq 0\}\\
\mathfrak{M}_s&:=\{x\in\B\,:\,0\in\overline{\{\T_t(x)\}}^{w^*}_{t\geq 0}\}.
\end{align}
Moreover, in this case $\mathfrak{M}_r$ is a von Neumann subalgebra (called \emph{reversible algebra}), the action of each $\T_t$ on it is a $*$-automorphism, and there exists a normal conditional expectation onto $\mathfrak{M}_r$ compatible with $\rho$. In particular this means that $\mathfrak{M}_r$ is atomic by Proposition \ref{E-atomic}, and it is contained in $\nT$.\\ Therefore, it is natural to ask us if we can have the equality $\mathfrak{M}_r=\nT$. 

A first answer to this problem is given in \cite{FSU-NtFt}, Theorem $11$, when $\nT$ is atomic. 
\begin{theorem}\label{equiv-NT-atom}
Assume there exists a faithful normal invariant state. Then $\nT$ is atomic
if and only if EID holds with $\nT=\mathfrak{M}_r$ and $\M_2=\M_0$.
\end{theorem}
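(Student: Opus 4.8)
The plan is to prove the two implications separately, the converse being the substantial one, and to reduce everything to the two identifications $\nT=\mathfrak{M}_r$ and $\M_0=\mathfrak{M}_s$. The easy implication is almost free: if EID holds with $\nT=\mathfrak{M}_r$, then, as recalled just before the statement, the existence of a faithful normal invariant state makes $\mathfrak{M}_r$ the range of a normal conditional expectation, so by Proposition~\ref{E-atomic} (item~2 $\Rightarrow$ item~1) it is atomic; hence $\nT$ is atomic. No further work is needed here.

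For the converse, assume $\nT$ is atomic. Since a faithful normal invariant state $\rho$ exists, I would invoke the Jacobs--de Leeuw--Glicksberg splitting $\B=\mathfrak{M}_r\oplus\mathfrak{M}_s$, together with the always-valid inclusions $\mathfrak{M}_r\subseteq\nT$ and $\M_0\subseteq\mathfrak{M}_s$ (the latter because $\T_t(x)\to0$ forces $0$ into the weak$^*$ orbit closure). Thus it suffices to establish $\nT=\mathfrak{M}_r$ and $\M_0=\mathfrak{M}_s$: granting them, (EID1) and (EID2) follow with $\M_2=\M_0$, the fully automorphic case $\nT=\B$ (where $\M_0=\{0\}$) being the degenerate exception. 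First I would record the concrete picture supplied by Theorem~\ref{th:main}: $\h=\oplus_i\kk_i\otimes\mm_i$, $\nT=\oplus_i\mathcal{B}(\kk_i)\otimes\unit_{\mm_i}$, with $\T_t(a\otimes b)=\mathrm{e}^{\mi tH_i}a\,\mathrm{e}^{-\mi tH_i}\otimes\T^{\mm_i}_t(b)$ and $\mathcal{N}(\T^{\mm_i})=\mathbb{C}\unit_{\mm_i}$.

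The key step is $\nT\subseteq\mathfrak{M}_r$, which I would obtain through a pure point spectrum argument. On $\nT$ the semigroup acts as the automorphism group $\mathrm{Ad}(\mathrm{e}^{\mi t\widetilde H})$ with $\widetilde H=\oplus_i H_i\otimes\unit_{\mm_i}$, so $\nT$ is weak$^*$ spanned by peripheral eigenvectors of $\T_t$ (hence contained in $\mathfrak{M}_r$) exactly when every $H_i$ has pure point spectrum. To prove the latter, let $D_i$ be the partial trace over $\mm_i$ of the $i$-th block of $\rho$; faithfulness of $\rho$ makes $D_i$ a faithful density on $\kk_i$, hence trace class with trivial kernel and finite-dimensional eigenspaces exhausting $\kk_i$. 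Testing invariance of $\rho$ against $a\otimes\unit_{\mm_i}$ and using unitality of $\T^{\mm_i}$ gives $\mathrm{e}^{-\mi tH_i}D_i\mathrm{e}^{\mi tH_i}=D_i$, i.e.\ $[H_i,D_i]=0$; therefore $H_i$ leaves each finite-dimensional spectral eigenspace of $D_i$ invariant and is diagonalized there, so $H_i$ has pure point spectrum. Consequently $\widetilde H$ has pure point spectrum, the products $P_\mu x P_\nu$ (with $P_\mu,P_\nu$ spectral projections of $\widetilde H$) span $\nT$ weak$^*$ and are peripheral eigenvectors of $\T_t$, whence $\nT\subseteq\mathfrak{M}_r$ and $\nT=\mathfrak{M}_r$.

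Finally I would establish $\M_0=\mathfrak{M}_s$, and this is where the main obstacle lies. Since $\M_0\subseteq\mathfrak{M}_s$ is automatic, the point is to promote the abstract JdLG property ``$0$ lies in the weak$^*$ orbit closure'' to genuine weak$^*$ convergence of the whole orbit on the dissipative directions. Here I would use that each $\T^{\mm_i}$ has trivial decoherence-free algebra and a faithful normal invariant state $\sigma_i$ inherited from $\rho$, which forces convergence to equilibrium $\T^{\mm_i}_t(b)\to\tr{\sigma_i b}\unit_{\mm_i}$ weak$^*$. Writing $b=(b-\tr{\sigma_i b}\unit_{\mm_i})+\tr{\sigma_i b}\unit_{\mm_i}$ splits each $a\otimes b$ into a part whose orbit vanishes (traceless $\mm_i$-component, a bounded factor times a vanishing one) and a part lying in $\nT=\mathfrak{M}_r$, and an analogous treatment of the off-diagonal blocks yields $\B=\nT\oplus\M_0$, hence $\M_0=\mathfrak{M}_s$. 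The hard part will be precisely this convergence-to-equilibrium input: turning ergodicity of $\T^{\mm_i}$ (trivial fixed-point and reversible algebras plus a faithful invariant state) into actual weak$^*$ convergence of the full net, which must be invoked carefully in conjunction with the pure point peripheral spectrum already obtained.
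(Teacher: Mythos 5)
The paper does not actually prove this theorem: it is quoted verbatim as Theorem 11 of \cite{FSU-NtFt}, so there is no in-paper argument to compare against. Judged on its own, your proposal follows the architecture of the cited proof: the easy direction via the normal conditional expectation onto $\mathfrak{M}_r$ and Proposition \ref{E-atomic} is complete, and the key step $\nT\subseteq\mathfrak{M}_r$ is correct and well executed --- the reduced density $D_i$ obtained by partial trace of $p_i\rho p_i$ over $\mm_i$ is a faithful positive trace-class operator commuting with $H_i$, so $H_i$ has pure point spectrum, and the spectral projections of $\widetilde H$ exhibit a weak$^*$-total family of peripheral eigenvectors in $\nT$. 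This is the right mechanism and it is where the faithfulness of $\rho$ genuinely enters.

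The gap is in the last step, $\mathfrak{M}_s=\M_0$, and specifically in the phrase ``an analogous treatment of the off-diagonal blocks''. The treatment is not analogous: on a corner $p_i\B p_j$ with $i\neq j$ there is no invariant state, no splitting $b=(b-\tr{\sigma_i b}\unit)+\tr{\sigma_i b}\unit$, and no Frigerio--Verri theorem for the restricted contraction semigroup, so the argument you use on the diagonal blocks simply does not apply there. Nor can you fix it by applying Frigerio--Verri to the $2\times 2$ system on $(p_i+p_j)\h$, because for that semigroup $\FT$ is strictly smaller than $\nT$ in general (the Hamiltonians $H_i$ act nontrivially), so convergence to the fixed-point expectation is not what you want. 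The standard repair uses the factorization $\T_t=\dfT_t\circ\dissT_t$ of Theorem \ref{th:main}: Frigerio--Verri applies globally to $\dissT$ (for which the fixed-point and decoherence-free algebras both equal $\nT$ and a faithful invariant product state exists), giving $\dissT_t(x)\to\E(x)\in\nT$ weak$^*$ for every $x\in\B$, and the residual conjugation by $\mathrm{e}^{\mi t\widetilde H}$ is controlled precisely because $\widetilde H$ has pure point spectrum, so the unitary group is almost periodic and preserves weak$^*$ convergence to $0$ of the remainder. Without some such device the off-diagonal part of (EID2) is unproved; a secondary, smaller issue is that even on the diagonal blocks you argue only on simple tensors $a\otimes b$, and passing from their span to all of $p_i\B p_i$ requires a boundedness/closure argument that should be made explicit.
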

This means that, in this case, the decomposition induced by decoherence is unique, i.e. the only way to realize it, is taking $\nT=\mathfrak{M}_r$ (and, consequently, $\M_2=\M_0$).

In this section we want to show that, \emph{if $\T$ has a faithful invariant state, the decoherence-free subalgebra $\nT$ is always atomic and it coincides with the reversible algebra}. At last this result allows us to provide an answer to the initial problem of the existence and uniqueness of the decomposition induced by decoherence.\\

Recalling {the decomposition given in Equation \eqref{NT-q-pi},
$$\nT=q\nT q\oplus \bigoplus_i p_i\nT p_i,$$
the first step is to prove that the projection $q$ is zero}.

\begin{proposition}\label{ntfactors} If there exists a normal faithful invariant state then $\nT$ is direct sum of factors.
\end{proposition}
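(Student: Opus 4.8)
The plan is to show that the assertion is equivalent to the vanishing of the ``continuous'' central projection $q$ appearing in the decomposition \eqref{NT-q-pi}. If $q=0$, then $\nT=\bigoplus_{i\in I}p_i\nT p_i$ with each $p_i\nT p_i$ a factor (by Proposition \ref{NT-dec} and the properties of the projections $p_i$), so $\nT$ is a direct sum of factors. Conversely, recall that in \eqref{NT-q-pi} the corner $q\nT q$, if nonzero, has diffuse center; since a countable direct sum of factors has atomic center, $q\nT q$ cannot be such a sum unless $q=0$. Hence ``$\nT$ is a direct sum of factors'' amounts exactly to the atomicity of the abelian algebra $\Z{\nT}$, i.e. to $q=0$, and this is what I would prove.

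To establish that $\Z{\nT}$ is atomic I would compare it with the reversible algebra $\mathfrak{M}_r$. Since a faithful normal invariant state exists, $\mathfrak{M}_r$ is a von Neumann subalgebra which is atomic (being the image of a normal conditional expectation, via Proposition \ref{E-atomic}) and satisfies $\mathfrak{M}_r\subseteq\nT$. Every fixed point of $\T$ is an eigenvector relative to the eigenvalue $0$, so $\FT\subseteq\mathfrak{M}_r$; combining this with Proposition \ref{prop:znt}, which gives $\Z{\nT}\subseteq\FT$, I obtain $\Z{\nT}\subseteq\mathfrak{M}_r$. On the other hand, because $\mathfrak{M}_r\subseteq\nT$, every element of $\Z{\nT}$ commutes with all of $\mathfrak{M}_r$, so that
$$\Z{\nT}\subseteq\mathfrak{M}_r\cap\mathfrak{M}_r'=\Z{\mathfrak{M}_r}.$$

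It then remains to transfer atomicity from $\Z{\mathfrak{M}_r}$ to $\Z{\nT}$. Since $\mathfrak{M}_r$ is atomic, its center $\Z{\mathfrak{M}_r}$ is an atomic abelian von Neumann algebra, hence (on the separable space $\h$) $*$-isomorphic to $\ell^\infty(J)$ for a countable $J$. The elementary fact I would invoke is that every von Neumann subalgebra of an atomic abelian von Neumann algebra is again atomic: given a nonzero projection $p$ in the subalgebra, one picks an atom $e\le p$ of the ambient algebra and checks that the infimum of all projections of the subalgebra dominating $e$ is a minimal projection of the subalgebra lying below $p$. Applying this to $\Z{\nT}\subseteq\Z{\mathfrak{M}_r}$ shows that $\Z{\nT}$ is atomic, whence $q=0$ and $\nT=\bigoplus_i p_i\nT p_i$ is a direct sum of factors.

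The main obstacle, and the only genuinely non-formal point, is this last descent of atomicity to $\Z{\nT}$: the inclusions above follow at once from Propositions \ref{prop:znt} and \ref{E-atomic} together with the quoted properties of $\mathfrak{M}_r$, but one must verify that a diffuse abelian algebra cannot sit as a von Neumann subalgebra of an atomic one, so that $\Z{\nT}$ truly inherits atomicity. This is precisely what excludes the continuous part, and it is where the existence of the faithful normal invariant state is essentially used, entering through the atomicity of $\mathfrak{M}_r$.
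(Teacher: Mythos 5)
Your proof is correct and follows essentially the same route as the paper's: both hinge on Proposition \ref{prop:znt} (giving $\Z{\nT}\subseteq\FT$) together with the atomicity of the algebra that is the image of a normal conditional expectation, to conclude that $\Z{\nT}$ cannot contain a diffuse part, i.e.\ $q=0$. The only cosmetic difference is that you detour through $\mathfrak{M}_r$ and a general lemma on von Neumann subalgebras of atomic abelian algebras, whereas the paper works directly with $\FT$ and exhibits the diagonal projections $p_A$, $A\subseteq\Gamma_c$, as projections of $\FT$ majorizing no minimal projection.
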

\begin{proof} It is known that if there exists a normal faithful invariant state $\FT$ is atomic, since it is the image of a normal conditional expectation. By Proposition \ref{prop:znt} $\mathcal Z(\nT)\subseteq\FT$, so $\mathcal Z(\nT)\subseteq \mathcal Z(\FT)$. If we assume by contradiction that {$q$ is not a zero projection,
for every measurable set $A\subseteq\Gamma_c$ with $\mu_c(A)>0$ the operator}
$$p_A=\int_A \unit_{\h_\gamma} d\mu_{c}(\gamma)$$
is a diagonal projection, and so it belongs to $\mathcal Z(\nT)\subseteq \mathcal F(\T)$. Thus we have projection a $p_A$ in $\FT$ that does not majorize a minimal projection, and this is a contradiction for the atomicity of $\FT$. 
\end{proof}

Therefore, if there exists a faithful normal invariant state, Proposition before gives the decomposition
\begin{equation}\label{dec-NT-fattori}\nT=\bigoplus_i p_i\nT p_i\end{equation}
for a suitable family of mutually orthogonal projections $(p_i)_i$ in $\nT$ summing up to the identity and minimal in the center of $\nT$. Moreover every $p_i\nT p_i$ is a factor. Our aim is now to prove that each of them is a type I factor.

First of all note that, since each $p_i$ is a fixed point, $p_i\nT p_i$ is the decoherence-free subalgebra of the QMS $\T^i$ on $\mathcal{B}(p_i\h)$ given by the restriction of the semigroup to this algebra. In other words we have $p_i\nT p_i=\mathcal N(\T^{i})$. If we prove that also every $\T^{i}$ possesses a faithful normal invariant state, then we can reduce to study QMSs having a factor as decoherence-free subalgebra.\\
As a first step we show that invariant states inherit the block structure of $\nT$.

\begin{lemma}\label{lem-fuori-diag} {Assume there exists a faithful normal invariant state.} Let $(p_i)_i$ be the family of projections in \eqref{dec-NT-fattori} and $\sigma$ be an invariant state. Then $p_i\sigma p_j=0$, whenever $i\neq j$.\end{lemma}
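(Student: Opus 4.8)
The plan is to prove that the off-diagonal blocks $p_i \sigma p_j$ of any invariant state $\sigma$ vanish by exploiting the fact that each $p_i$ is a fixed point of the semigroup together with the invariance of $\sigma$ and the automorphic action of $\T_t$ on $\nT$. First I would recall that by Proposition \ref{prop:znt} each $p_i \in \Z{\nT} \subseteq \FT$, so $\T_t(p_i) = p_i$ for all $t \ge 0$, and that by Proposition \ref{prop-struct-NT} item 4 the action of $\T_t$ on $\nT$ is implemented by the Hamiltonian as $\T_t(x) = e^{\mi t H} x\, e^{-\mi t H}$. In particular, since $p_i$ is a fixed point, $e^{\mi t H}$ must commute with each $p_i$, so $U_t := e^{\mi t H}$ leaves each subspace $p_i \h$ invariant and $U_t p_i = p_i U_t$.

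The key computation would use invariance of $\sigma$ phrased on the predual: $\sigma$ invariant means $\tr{\sigma\, \T_t(x)} = \tr{\sigma\, x}$ for all $x \in \B$, equivalently $\Tst_t(\sigma) = \sigma$ for the predual semigroup. The strategy is to test invariance against elements $x \in \nT$ supported between blocks $i$ and $j$. Concretely, I would take $x = p_i\, x\, p_j$ with $x \in \nT$ and compute, using item 2 of Proposition \ref{prop-struct-NT} (the multiplicativity $\T_t(p_i\, y) = \T_t(p_i)\T_t(y) = p_i\, \T_t(y)$ since $p_i \in \nT$ is a fixed point), that $\T_t$ maps the off-diagonal corner $p_i \nT p_j$ into itself. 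Then the matrix element $\tr{\sigma\, \T_t(p_i\, x\, p_j)}$ equals $\tr{p_j\, \sigma\, p_i\, \T_t(x)}$, and invariance forces a relation involving only $p_j \sigma p_i$. Writing $U_t = e^{\mi t H}$ and using $\T_t(x) = U_t x U_t^*$ on $\nT$, the invariance identity becomes $\tr{p_j \sigma p_i\, U_t x U_t^*} = \tr{p_j \sigma p_i\, x}$ for all such $x$ and all $t$; since $U_t$ commutes with the $p_i$, this says the operator $p_j \sigma p_i$ is fixed by the automorphism group $\mathrm{Ad}(U_t)$ restricted to the corner $p_i \nT p_j$.

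To finish, I would combine this invariance of $p_j \sigma p_i$ under $\mathrm{Ad}(U_t)$ with the \emph{faithfulness} of the reference invariant state. The point is that faithfulness of a normal invariant state, together with the fact that $\sigma$ is an \emph{arbitrary} invariant state, lets one argue that the purely off-diagonal part cannot survive: intuitively, $p_i \sigma p_j$ would generate, via positivity of $\sigma$ and the Cauchy--Schwarz inequality $|\tr{\sigma\, p_i a p_j}|^2 \le \tr{\sigma\, p_i a a^* p_i}\,\tr{\sigma\, p_j p_j}$, a contribution tied to the diagonal blocks, and the mixing induced by $\mathrm{Ad}(U_t)$ across distinct central blocks $i \ne j$ (which have genuinely different spectral behaviour of $H$) averages it to zero. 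The cleanest route is probably to observe that since $p_i, p_j$ are \emph{minimal} central projections of $\nT$, the restriction $\T^i$ to $p_i \h$ has its own faithful invariant state, and to use that the semigroup acts ergodically enough between inequivalent blocks so that any invariant functional must respect the block decomposition.

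The main obstacle I anticipate is precisely this last step: passing from the static invariance identity $\tr{p_j \sigma p_i\, U_t x U_t^*} = \tr{p_j \sigma p_i\, x}$ to the conclusion $p_i \sigma p_j = 0$. The identity alone only says $p_j \sigma p_i$ is fixed by an automorphism group; it does not by itself force vanishing unless one brings in positivity of $\sigma$ and some nondegeneracy separating the blocks. I expect the real work is to show that the existence of a faithful invariant state rules out a nonzero fixed off-diagonal corner, most likely by a Schwarz-inequality argument bounding $\nm{p_i \sigma p_j}$ by diagonal quantities and then using a spectral/averaging argument on $U_t$ across the distinct blocks to kill it. The algebraic bookkeeping with $p_i \in \FT \cap \Z{\nT}$ and the multiplicativity properties of $\T_t$ on $\nT$ is routine; the analytic heart is this faithfulness-plus-positivity argument.
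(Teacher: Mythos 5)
There is a genuine gap, and in fact the setup itself collapses before the step you flag as the difficulty. Your test elements $p_i x p_j$ with $x\in\nT$ are all \emph{zero}: since $p_i,p_j$ are orthogonal minimal projections of $\Z{\nT}$ and $\nT=\oplus_k p_k\nT p_k$, every $y\in\nT$ satisfies $p_i y p_j = p_i y p_i p_j = 0$ for $i\neq j$. So the invariance identity you derive, $\tr{p_j\sigma p_i\,U_t x U_t^*}=\tr{p_j\sigma p_i\,x}$ for $x$ in the corner $p_i\nT p_j$, is the tautology $0=0$ and carries no information. To conclude $p_i\sigma p_j=0$ as an operator you must test $\tr{p_i\sigma p_j\,x}$ against \emph{arbitrary} $x\in\B$, and there $\T_t$ is not implemented by $U_t=\mathrm{e}^{\mi tH}$, so the whole $\mathrm{Ad}(U_t)$-fixed-point framing, and the spectral/averaging argument on $U_t$ you hope will finish the proof, does not apply. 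Relatedly, no Schwarz-inequality or faithfulness-of-$\sigma$ argument is needed or used.

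The missing idea is the mean ergodic theorem for the semigroup on all of $\B$. The paper's proof starts as you do, using $p_i,p_j\in\FT$ to get $\tr{p_i\sigma p_j\,x}=\tr{\sigma\,p_i\T_s(x)p_j}$ for every $x\in\B$ and every $s$, but then takes the Ces\`aro average $t^{-1}\int_0^t\T_s(x)\,\de s$ and lets $t\to\infty$: because a faithful normal invariant state exists, these averages converge to $\E(x)$, where $\E$ is the normal conditional expectation onto the fixed-point algebra $\FT$. Since $\FT\subseteq\nT=\oplus_k p_k\nT p_k$ is block-diagonal, $p_i\E(x)p_j=0$, hence $\tr{p_i\sigma p_j\,x}=\tr{p_i\sigma p_j\,\E(x)}=0$ for all $x\in\B$, i.e.\ $p_i\sigma p_j=0$. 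So the averaging that kills the off-diagonal block is of the full semigroup $(\T_s)_s$ landing in $\FT$, not of the unitary group generated by $H$ acting on $\nT$; without this step your argument cannot be completed along the lines you describe.
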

\begin{proof} Since central projections $p_i,p_j$ are in 
$\mathcal{F}(\T)$, for all $t\ge 0$ and $x\in\B$ we have 
$\T_t(p_i x p_j) = p_i \T_t(x) p_j$ and also 
$\T_{*t}(p_i \sigma p_j) = p_i\T_{*t}(\sigma) p_j$ for 
all trace class operator $\sigma$. It follows that 
\[
\tr{p_i \sigma p_j x} 
=\tr{p_i \T_{*s}(\sigma ) p_j x}
=\tr{ \sigma  \T_{s}(p_i x p_j )}
=\tr{ \sigma  p_i \T_{s}(x ) p_j}
\]
for all invariant state $\sigma$ and $x\in\B$.
Integrating on $[0,t]$ and dividing by $t$ we find
\[
\tr{p_i\sigma p_j x}  = \tr{\sigma p_i
\left(t^{-1}\int_0^t\T_{s}(x ) ds\right) p_j},
\]
and, taking the limit as $t\to\infty$, we have, 
\[
\tr{p_i\sigma p_j x} = \tr{p_i\sigma p_j\mathcal{E}(x)}
\]
where $\mathcal{E}$ is the conditional expectation onto $\mathcal{F}(\T)$, thus $\mathcal E(x)$ is a decomposable operator. Now, since $\FT$ 
is also contained in $\nT=\oplus_{k\in I}\, p_k\nT p_k$,  we get 
$p_i \mathcal{E}(x)p_j=0$ for $i\neq j$ as well as  
$\tr{p_i\sigma p_j x}=0$, and so $p_i\sigma p_j =0$ by the 
arbitrarity of $x$.
\end{proof}

As a consequence of the previous lemma, every faithful normal invariant state is decomposable as a direct sum of faithful invariant functionals. More precisely:
\begin{proposition}\label{struct-inv-states}{Assume there exists a faithful normal invariant state.}
If $\sigma$ is a normal invariant state, then $\sigma=\sum_{i\in I} p_i \sigma p_i$ where every $p_i \sigma p_i$ is a (eventually zero) normal invariant functional for $\T^{i}$.\\
Moreover, if $\sigma$ is faithful, then $$\sigma_i:=\frac{p_i \sigma p_i}{\tr{p_i\sigma p_i}}$$ is a faithful state on $\mathcal{B}(p_i\h)$.\\
\end{proposition}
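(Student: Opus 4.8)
The plan is to prove the two assertions of Proposition~\ref{struct-inv-states} separately, using Lemma~\ref{lem-fuori-diag} as the crucial input. For the first part, let $\sigma$ be any normal invariant state. Since the projections $(p_i)_i$ are mutually orthogonal, sum to $\unit$, and lie in $\Z{\nT}$, I can write $\sigma = \left(\sum_i p_i\right)\sigma\left(\sum_j p_j\right) = \sum_{i,j} p_i\sigma p_j$ as a (weakly convergent) sum. Lemma~\ref{lem-fuori-diag} kills all off-diagonal terms $p_i\sigma p_j$ with $i\neq j$, leaving $\sigma = \sum_i p_i\sigma p_i$. I would then check that each $p_i\sigma p_i$ is invariant for $\T^i$: recalling from the discussion preceding the proposition that $\T^i$ is the restriction of $\T$ to $\mathcal{B}(p_i\h)$ (well-defined because each $p_i$ is a fixed point, so $\T_t(p_i x p_i)=p_i\T_t(x)p_i$), invariance of $\sigma$ for $\T$ together with the identity $\Tst_t(p_i\sigma p_i)=p_i\Tst_t(\sigma)p_i$ forces $\Tst_t^i(p_i\sigma p_i)=p_i\sigma p_i$. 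The fact that $p_i\sigma p_i$ is a positive normal functional is immediate since it is a compression of the positive normal functional $\sigma$.

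For the second part, assume $\sigma$ is faithful. The normalization $\sigma_i:=p_i\sigma p_i/\tr{p_i\sigma p_i}$ is well-defined provided $\tr{p_i\sigma p_i}\neq 0$; I would argue this is strictly positive because $\sigma$ is faithful and $p_i\neq 0$, so that $\tr{\sigma p_i}=\tr{p_i\sigma p_i}>0$ (a faithful state assigns strictly positive trace to every nonzero projection). Then $\sigma_i$ is by construction a normal state on $\mathcal{B}(p_i\h)$. The main point is faithfulness of $\sigma_i$: I would deduce it from faithfulness of $\sigma$. Concretely, for a nonzero positive $x\in\mathcal{B}(p_i\h)$, viewed inside $\B$ via $p_i x p_i = x$, one has $\tr{\sigma_i x}$ proportional to $\tr{\sigma p_i x p_i}=\tr{p_i\sigma p_i x}=\tr{\sigma x}$ using that $p_i$ commutes appropriately and $x=p_i x p_i$; since $\sigma$ is faithful and $x\neq 0$ this is strictly positive, giving faithfulness of $\sigma_i$ on $\mathcal{B}(p_i\h)$.

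I expect the main obstacle to be the bookkeeping around the restricted semigroup $\T^i$ and the trace-class predual action $\Tst$: one must be careful that $p_i\sigma p_i$, a priori only a functional on $\B$, restricts to a genuine normal functional on the corner algebra $\mathcal{B}(p_i\h)$, and that invariance transfers correctly under this restriction. The key structural facts enabling this are that each $p_i$ is a fixed point of $\T$ (so that $\T_t$ maps the corner $p_i\B p_i$ into itself) and that $p_i\in\Z{\nT}\subseteq\FT$, both established earlier. Once these identifications are in place, the faithfulness computation reduces to the elementary observation that compressing a faithful state by a central fixed projection yields, after normalization, a faithful state on the corner. The convergence of the series $\sigma=\sum_i p_i\sigma p_i$ in the appropriate topology should cause no difficulty since it is a pinching of a normal functional by a family of orthogonal projections summing to the identity.
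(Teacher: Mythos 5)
Your proposal is correct and follows exactly the route the paper intends: the paper states this proposition without a displayed proof, presenting it as an immediate consequence of Lemma~\ref{lem-fuori-diag} (which kills the off-diagonal blocks $p_i\sigma p_j$, $i\neq j$) together with the facts, established just before, that each $p_i$ is a central fixed projection so that $\Tst_t(p_i\sigma p_i)=p_i\Tst_t(\sigma)p_i$ and $\T^i$ is the restriction of $\T$ to the corner $\mathcal{B}(p_i\h)$. Your filling-in of the details (trace-norm pinching, positivity of $\tr{\sigma p_i}$ for $p_i\neq 0$, and faithfulness of the normalized compression via $\tr{p_i\sigma p_i x}=\tr{\sigma x}$ for $x=p_ixp_i\geq 0$) is accurate and adds nothing beyond what the paper's argument requires.
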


Therefore, as we said before, we can assume that $\T$ is a QMS on $\B$ such that its decoherence-free subalgebra $\nT$ is a factor.

\smallskip
 In order to prove that $\nT$ is a type I factor, we investigate its relation with $\mathfrak{M}_r$. In the following we will denote this algebra by $\mathcal{R}(\T)$ in such a way that it will be clear to which semigroup the algebra refers.\\
Since $\Rr\T$ is atomic, we can provide a block decomposition of it through a simple change of the proof of Theorem \ref{th:main} (see Theorem 3.2 in \cite{DFSU}). 
 

\begin{lemma}\label{lem-Mr}
If there exists a faithful normal invariant state, the following facts hold:
\begin{enumerate}
\item there exist two sequences of separable Hilbert spaces $(\Ss_j)_{j\in J}$ and $(\ff_j)_{j\in J}$ such that $\h=\oplus_{j\in J}\left(\Ss_j\otimes\ff_j\right)$ and
$$\Rr\T=\oplus_{j\in J}\left(\mathcal{B}(\Ss_j)\otimes\unit_{\ff_j}\right),$$
\item for every GSKL representation of $\Ll$ by means of operators
$H,(L_\ell)_{\ell\ge 1}$, we have
\[
L_\ell =\oplus_{j\in J}
\left(\unit_{\Ss_j}\otimes M_\ell^{(j)}\right)
\]
for a collection $(M_\ell^{(j)})_{\ell\geq 1}$ of operators in $\mathcal{B}(\ff_j)$,
such that the series $\sum_{\ell\ge 1}M_\ell^{(j)*}M_\ell^{(j)}$
strongly convergent for all $j\in J$, and
\[
H=\oplus_{j\in J}\left(K_j\otimes\unit_{\ff_j}
+\unit_{\Ss_j}\otimes M_0^{(j)}\right)
\]
for  self-adjoint operators $K_j\in\mathcal{B}(\Ss_j)$
and $M_0^{(j)}\in\mathcal{B}(\ff_j)$, $j\in J$,
\item defining on $\mathcal{B}(\ff_j)$ the GKSL generator $\Ll^{\ff_j}$ associated with operators $\{M_0^{(j)}, M_\ell^{(j)})\,:\,\ell^{(j)}\geq 1\}$, we have $$\T_t(x_j\otimes y_j)=\hbox{\rm e}^{\mi t K_j }x_j\hbox{\rm e}^{-\mi t K_j }\otimes\T^{\ff_j}(y_j)$$
for all $t\geq 0$, $x_j\in\mathcal{B}(\Ss_j)$ and $y_j\in\mathcal{B}(\ff_j)$, where $\T^{\ff_j}$ is the QMS generated by $\Ll^{\ff_j}$, 
\item $\Rr{\T^{\Ss_j}}=\mathcal{N}(T^{\Ss_j})=\mathcal{B}(\Ss_j)$ and $\Rr{\T^{\ff_j}}=\C\unit_{\ff_j}$ for all $j\in J$, where $T^{\Ss_j}$ denotes the QMS on $\mathcal{B}(\Ss_j)$ generated by $\Ll^{\Ss_j}=\mi[K_j,\,\cdot\,]$,
\item $K_j$ has pure point spectrum for all $j\in J$.
\end{enumerate}
\end{lemma}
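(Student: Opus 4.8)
The plan is to follow the proof of Theorem~\ref{th:main} (Theorem 3.2 in \cite{DFSU}) almost verbatim, replacing the decoherence-free subalgebra $\nT$ by the reversible algebra $\Rr\T$ and using the decisive fact that $\Rr\T$, being the range of a normal conditional expectation, is \emph{always} atomic (Proposition~\ref{E-atomic}). Item~1 is then immediate: Proposition~\ref{E-atomic}(3) produces a countable family of pairwise orthogonal minimal projections $(z_j)_{j\in J}$ in $\Z{\Rr\T}$ summing to $\unit$ with each $z_j\Rr\T z_j$ a type~I factor, and writing the $j$-th block as $\Ss_j\otimes\ff_j$ so that $z_j\Rr\T z_j=\mathcal{B}(\Ss_j)\otimes\unit_{\ff_j}$ gives the stated decompositions of $\h$ and of $\Rr\T$. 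The key preliminary step, the exact analogue of Proposition~\ref{prop:znt}, is that $\Z{\Rr\T}\subseteq\FT$: each $\T_t$ restricts to a $*$-automorphism of $\Rr\T$, hence of its abelian, type~I center, and Lemma~\ref{inner} forces this restriction to be the identity. In particular every $z_j$ is a fixed point lying in $\Z{\Rr\T}\subseteq\Rr\T\subseteq\nT$.

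For item~2 I would recover the block structure of the GKSL operators exactly as in \cite{DFSU}. Proposition~\ref{prop-struct-NT}(5) gives $L_\ell\in\nT'$, and since $\Rr\T\subseteq\nT$ we get $L_\ell\in\nT'\subseteq\Rr\T'=\oplus_j(\unit_{\Ss_j}\otimes\mathcal{B}(\ff_j))$, so $L_\ell=\oplus_j(\unit_{\Ss_j}\otimes M_\ell^{(j)})$. For $H$, the relation $z_j=\T_t(z_j)=e^{\mi tH}z_je^{-\mi tH}$ (using that $z_j$ is a fixed point of $\nT$ and that $\T_t$ acts on $\nT$ as $\mathrm{Ad}(e^{\mi tH})$ by Proposition~\ref{prop-struct-NT}(4)) yields $[H,z_j]=0$, hence $H=\oplus_j Hz_j$. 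Restricted to a block, the one-parameter unitary group $e^{\mi tHz_j}$ normalizes $\mathcal{B}(\Ss_j)\otimes\unit_{\ff_j}$; since the normalizer of a type~I subfactor consists, up to a scalar phase, of product unitaries $u_t\otimes v_t$, Stone's theorem splits $Hz_j=K_j\otimes\unit_{\ff_j}+\unit_{\Ss_j}\otimes M_0^{(j)}$. Item~3 is then the routine verification that, upon substituting these forms into \eqref{eq:GKSL}, the block generator becomes the sum of the two commuting pieces $\mi[K_j,\cdot\,]$ and $\Ll^{\ff_j}$ acting on the two tensor factors, so that $\T_t$ factorizes as $\mathrm{Ad}(e^{\mi tK_j})\otimes\T^{\ff_j}_t$ on $z_j\h$.

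The genuinely new content, absent from Theorem~\ref{th:main}, is item~5: the pure point spectrum of $K_j$. Here the faithful invariant state is essential. Arguing as in Lemma~\ref{lem-fuori-diag} and Proposition~\ref{struct-inv-states} for the central fixed projections $z_j$, the block component of a faithful invariant state is faithful on $z_j\h$; taking the partial trace over $\ff_j$ in the tensor factorization of item~3 produces a faithful normal invariant state $\tau$ for the automorphic semigroup $\T^{\Ss_j}=\mathrm{Ad}(e^{\mi tK_j})$. Invariance forces the density of $\tau$ to commute with $K_j$, and since this density is a faithful trace class operator — hence has a complete orthonormal eigenbasis with finite-dimensional eigenspaces — $K_j$ leaves each such finite-dimensional eigenspace invariant and is therefore diagonalizable on it. Collecting these eigenbases shows $K_j$ has pure point spectrum. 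Item~4 is now at hand: $\mathcal{N}(\T^{\Ss_j})=\mathcal{B}(\Ss_j)$ because $\T^{\Ss_j}$ is automorphic, the eigenoperators $\kb{\phi_a}{\phi_b}$ of $\mathrm{Ad}(e^{\mi tK_j})$ span $\mathcal{B}(\Ss_j)$ so that $\Rr{\T^{\Ss_j}}=\mathcal{B}(\Ss_j)$, and finally any eigenoperator $b\notin\C\unit_{\ff_j}$ of $\T^{\ff_j}$ would make $\unit_{\Ss_j}\otimes b$ a $\T^{z_j}$-eigenoperator inside $z_j\Rr\T z_j=\mathcal{B}(\Ss_j)\otimes\unit_{\ff_j}$, which is impossible; hence $\Rr{\T^{\ff_j}}=\C\unit_{\ff_j}$.

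I expect item~5, together with the identification $\Rr{\T^{\ff_j}}=\C\unit_{\ff_j}$ in item~4, to be the main obstacle. The block-diagonalization of the generator and the splitting of $H$ are essentially formal once the atomicity of $\Rr\T$ is in hand, but the spectral rigidity of $K_j$ is precisely the point where the reversible algebra behaves strictly better than a general $\nT$, and it rests entirely on the trace class structure of the invariant density; care is also needed to justify that the phase cocycle in the product-unitary decomposition of $e^{\mi tHz_j}$ can be absorbed so that $K_j$ and $M_0^{(j)}$ are genuine self-adjoint generators.
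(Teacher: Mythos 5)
Your proposal is correct and follows exactly the route the paper intends: the paper gives no written proof of Lemma~\ref{lem-Mr}, stating only that it is obtained by ``a simple change of the proof of Theorem~\ref{th:main}'' with the atomic algebra $\Rr\T$ in place of $\nT$, which is precisely what you carry out (atomicity of $\Rr\T$ from the normal conditional expectation, $\Z{\Rr\T}\subseteq\FT$ via Lemma~\ref{inner}, block-diagonalization of $L_\ell$ and $H$ from $\nT'\subseteq\Rr\T'$ and the normalizer/cocycle argument). Your argument for item~5 --- extracting a faithful normal $\mathrm{Ad}(e^{\mi tK_j})$-invariant density on $\Ss_j$ by partial trace and diagonalizing $K_j$ on its finite-dimensional eigenspaces --- is a sound way to supply the one genuinely new ingredient the paper leaves implicit.
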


In the following we will use notations of Lemma \ref{lem-Mr}.
\begin{theorem} \label{invariant-factor}
Assume there exists a normal faithful invariant state and $\Rr{\T}$ is a factor. Then $\nT=\Rr{\T}$.
\end{theorem}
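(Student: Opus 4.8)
The plan is to exploit the tensor factorization furnished by Lemma~\ref{lem-Mr} and to reduce the equality $\nT=\Rr{\T}$ to a single statement about the ``noisy'' factor, where the triviality of the reversible algebra can be turned against the decoherence-free subalgebra. Since $\Rr{\T}$ is assumed to be a factor, its center $\oplus_{j}\C(\unit_{\Ss_j}\otimes\unit_{\ff_j})$ must be trivial, so the index set $J$ of Lemma~\ref{lem-Mr} is a singleton: I write $\h=\Ss\otimes\ff$, $\Rr{\T}=\mathcal{B}(\Ss)\otimes\unit_\ff$, and fix the GKSL data $L_\ell=\unit_\Ss\otimes M_\ell$ and $H=K\otimes\unit_\ff+\unit_\Ss\otimes M_0$. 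First I would compute $\nT$ explicitly. Because $K\otimes\unit_\ff$ commutes with every $\unit_\Ss\otimes M_\ell$, a one-line induction gives, writing $\delta_{M_0}(\cdot)=\com{M_0}{\cdot}$,
$$\delta_H^{n}(L_\ell)=\unit_\Ss\otimes\delta_{M_0}^{n}(M_\ell)\qquad(\text{and similarly for }L_\ell^*).$$
Hence, by item~5 of Proposition~\ref{prop-struct-NT}, all the operators generating $\nT'$ lie in $\unit_\Ss\otimes\mathcal{B}(\ff)$, and the commutation theorem for tensor products yields
$$\nT=\mathcal{B}(\Ss)\otimes\mathcal{N}(\T^{\ff}),$$
where $\mathcal{N}(\T^{\ff})$ is the decoherence-free subalgebra of the QMS $\T^{\ff}$ on $\mathcal{B}(\ff)$ generated by $\{M_0,M_\ell\}_\ell$. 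Since $\Rr{\T}=\mathcal{B}(\Ss)\otimes\unit_\ff$, the theorem now becomes equivalent to the single identity $\mathcal{N}(\T^{\ff})=\C\unit_\ff$.

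To prove this I would first note that $\mathcal{N}(\T^{\ff})$ is a \emph{factor}: applying Proposition~\ref{prop:znt} to $\T^{\ff}$, its center is contained in the fixed-point algebra $\mathcal{F}(\T^{\ff})$, which in turn is contained in the reversible algebra $\Rr{\T^{\ff}}$; but item~4 of Lemma~\ref{lem-Mr} gives $\Rr{\T^{\ff}}=\C\unit_\ff$, so $\Z{\mathcal{N}(\T^{\ff})}=\C\unit_\ff$. On this factor $\T^{\ff}$ acts as a uniformly continuous group of $*$-automorphisms, so (restricting the bounded generator $\Ll^{\ff}$ to the invariant subalgebra) its generator is a bounded $*$-derivation of $\mathcal{N}(\T^{\ff})$. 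By Sakai's theorem every such derivation is inner (see e.g.~\cite{kadison}), so there is a self-adjoint $a\in\mathcal{N}(\T^{\ff})$ with $\T^{\ff}_t(x)=\mathrm{e}^{\mi t a}x\,\mathrm{e}^{-\mi t a}$ for all $x\in\mathcal{N}(\T^{\ff})$.

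The decisive step is then purely spectral. Every spectral projection of $a$ lies in $\mathcal{N}(\T^{\ff})$ and commutes with $a$, hence is left fixed by the automorphism group and therefore belongs to $\mathcal{F}(\T^{\ff})\subseteq\Rr{\T^{\ff}}=\C\unit_\ff$. Thus all spectral projections of $a$ are trivial, $a$ is a scalar, and $\T^{\ff}$ acts as the identity on $\mathcal{N}(\T^{\ff})$; consequently $\mathcal{N}(\T^{\ff})\subseteq\mathcal{F}(\T^{\ff})=\C\unit_\ff$, which forces $\mathcal{N}(\T^{\ff})=\C\unit_\ff$ and hence $\nT=\mathcal{B}(\Ss)\otimes\unit_\ff=\Rr{\T}$.

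The main obstacle I anticipate is the justification of the intermediate identity $\nT=\mathcal{B}(\Ss)\otimes\mathcal{N}(\T^{\ff})$: one must check carefully that the iterated-commutator generating set of Proposition~\ref{prop-struct-NT} really factorizes through $\ff$ and that the commutant interacts with the tensor splitting exactly as the commutation theorem predicts. Once this factorization is secured, the triviality of $\Rr{\T^{\ff}}$ together with the innerness supplied by Sakai's theorem makes the collapse of $\mathcal{N}(\T^{\ff})$ essentially automatic, so the bulk of the technical care should go into the first paragraph rather than the final spectral argument.
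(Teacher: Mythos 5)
Your proof is correct, but it departs from the paper's argument in two places, both times in an interesting way. First, for the reduction to the ``noisy'' factor: the paper establishes $\nT=\mathcal{B}(\Ss)\otimes\unit_\ff$ by checking the defining identities of $\nT$ only on elementary tensors $x\otimes y$ at the end of its proof, whereas you derive $\nT=\mathcal{B}(\Ss)\,\overline{\otimes}\,\mathcal{N}(\T^{\ff})$ in one stroke from the iterated-commutator characterization (item 5 of Proposition \ref{prop-struct-NT}) together with the commutation relation $(\C\unit_\Ss\overline{\otimes}\,\mathcal{A})'=\mathcal{B}(\Ss)\,\overline{\otimes}\,\mathcal{A}'$; your version is actually the tighter one, since it handles all of $\nT$ and not just product elements. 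Second, and more substantially, for the key identity $\mathcal{N}(\T^{\ff})=\C\unit_\ff$: the paper takes a projection $p\in\mathcal{N}(\T^{\ff})$, writes $p=\mu\unit_\ff+y$ via the Jacobs--de Leeuw--Glicksberg splitting with $\T^{\ff}_{t_\alpha}(y)\to 0$, and then computes matrix elements of $\mathrm{e}^{\mi tM_0}p\,\mathrm{e}^{-\mi tM_0}$ in an eigenbasis of $M_0$ to force $p$ scalar; you instead invoke Kadison--Sakai to realize the restricted automorphism group as $\mathrm{e}^{\mi ta}\cdot\mathrm{e}^{-\mi ta}$ with $a=a^*$ \emph{inside} $\mathcal{N}(\T^{\ff})$, and then observe that the spectral projections of $a$ are fixed points, hence lie in $\mathcal{F}(\T^{\ff})\subseteq\Rr{\T^{\ff}}=\C\unit_\ff$. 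Your route buys independence from the pure-point-spectrum property of $M_0$ and from the stable-space half of the Jacobs--de Leeuw--Glicksberg decomposition, at the price of the (nontrivial but standard) Kadison--Sakai innerness theorem; both routes ultimately rest on item 4 of Lemma \ref{lem-Mr}. Two cosmetic remarks: the intermediate claim that $\mathcal{N}(\T^{\ff})$ is a factor is never actually used (Kadison--Sakai and the spectral-projection argument need no factoriality), and you should note explicitly that $\T^{\ff}$ inherits a faithful normal invariant state by partial tracing, which is what legitimizes applying the reversible-algebra machinery to it --- the paper's own proof relies on this tacitly as well.
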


\begin{proof}
First of all note that, since $\Rr{\T}$ is atomic, if it is a factor, it has to be a type I factor. Then assume $\Rr{\T}=\mathcal{B}(\Ss)\otimes\unit_\ff$ for some separable Hilbert spaces $\Ss$ and $\ff$ such that $\h=\Ss\otimes\ff$.

We claim that $\mathcal{N}(\T^\ff)=\Rr{\T^\ff}$.\\
By item $3$ the algebra $\Rr{\T^\ff}$ is trivial, i.e. $\Rr{\T^\ff}=\C\unit_\ff$. So, since $\mathcal{N}(\T^\ff)$ contains $\Rr{\T^\ff}$, we assume there exists a non-zero projection $p$ in $\mathcal{N}(\T^\ff)$ and prove that $p=\unit_\ff$.\\
The Jacobs-de-Leeuw-Glicksberg splitting of $\mathcal{B}(\ff)$ gives $\mathcal{B}(\ff)=\C\unit_\ff\oplus\mathfrak{F}_s$, with $$\mathfrak{F}_s:=\{x\in\mathcal{B}(\ff)\,:\,0\in\overline{\{\T^\ff_t(x)\}}^{w^*}_{t\geq 0}\}$$
the corresponding stable space. Therefore $p=\mu\unit_\ff+y$ for some $\mu\in\C$, $y\in\mathfrak{F}_s$, and $w^*-\lim_\alpha\T^\ff_{t_\alpha}(y)=0$ for a generalized sequence $(t_\alpha)_\alpha\subseteq\R^+$ going to infinity. This implies there exists $w^*-\lim_\alpha\T^\ff_{t_\alpha}(p)=\mu\unit_\ff$.\\
On the other hand, since $\T^\ff$ is a uniformly continuous QMS on $\mathcal{B}(\ff)$, its action on the decoherence-free subalgebra is unitary and defined by the Hamiltonian $M_0$, i.e. $\T_t^\ff(x)=\mathrm{e}^{\mi tM_0}x\mathrm{e}^{-\mi tM_0}$ for all $x\in\mathcal{N}(\T^\ff)$. In particular $$\T_t^\ff(p)={\rm}\mathrm{e}^{\mi tM_0}p\mathrm{e}^{-\mi tM_0}\quad\forall\,t\geq 0.$$ Therefore, denoting by $(e_n)_{n\in\mathbb{N}}$ an orthonormal basis of $\ff$ given by eigenvectors of $M_0$ (see item $4$ in Lemma \ref{lem-Mr}), $M_0e_n=\lambda_ne_n$ with $\lambda_n\in\R$, and taking the normal functional $\sigma:=\ee{k}{l}$ for $l,k\in\mathbb{N}$, we obtain
\begin{align*}\mu\delta_{lk}&=\tr{\sigma\mu\unit_\ff}=\lim_\alpha\tr{\sigma\T^\ff_{t_\alpha}(p)}=\lim_\alpha\scal{e_l}{\mathrm{e}^{\mi t_\alpha M_0}p\mathrm{e}^{-\mi t_\alpha M_0}e_k}\\
&=\lim_\alpha \mathrm{e}^{\mi t_\alpha(\lambda_l-\lambda_k)}\scal{e_l}{pe_k}.
\end{align*}
This means that $\scal{e_l}{pe_l}=\mu$ for all $l$, and $\scal{e_l}{pe_k}=0$ for $l\neq k$, i.e. $p=\mu\unit_\ff$, and the equality $\mathcal{N}(\T^\ff)=\C\unit_\ff=\Rr{\T^\ff}$ is then proved, being $p$ a projection.

Now we show that $\nT=\mathcal{N}(\T^\Ss)\otimes\mathcal{N}(\T^\ff)=\mathcal{B}(\Ss)\otimes\unit_\ff$. \\
So, let $x\in \mathcal{B}(\Ss)=\mathcal{N}(\T^\Ss)$ and $y\in \mathcal{B}(\ff)$ such that $x\otimes y\in\nT$. Since {$x\otimes y^*y=(\unit_\Ss\otimes y^*)(x\otimes y)$, by item $2$ in Proposition \ref{prop-struct-NT} we have
\begin{align*}
{\rm}\mathrm{e}^{\mi tK}x\mathrm{e}^{-\mi tK}\otimes\T^\ff(y^*y)&=\T_t(x\otimes y^*y)=\T_t(\unit_\Ss\otimes y^*)\T_t(x\otimes y)\\
&=\left(\unit_\Ss\otimes\T_t^\ff(y^*)\right)\left({\rm}\mathrm{e}^{\mi tK}x\mathrm{e}^{-\mi tK}\otimes\T_t^\ff(y)\right)\\
&={\rm}\mathrm{e}^{\mi tK}x\mathrm{e}^{-\mi tK}\otimes\T_t^\ff(y^*)\T_t^\ff(y)
\end{align*}
for all $t\geq 0$,
}
giving $\T^\ff(y^*y)=\T_t^\ff(y^*)\T_t^\ff(y^*)$. Analogously, $\T^\ff(yy^*)=\T_t^\ff(y)\T_t^\ff(y^*)$, so that $y$ belongs to $\mathcal{N}(\T^\ff)$. 

On the other hand, the inclusion $\mathcal{B}(\Ss)\otimes\unit_\ff\subseteq\nT$ is clear thanks to the structure Theorem (Lemma \ref{lem-Mr}).
\end{proof}

We can now to show that $\nT$ is a type I factor.

\begin{theorem}\label{invariant-I-factor}
If there exists a normal faithful invariant state and $\nT$ is a factor, then $\nT$ is a type I factor. In particular, $\nT$ coincides with the reversible algebra $\Rr\T$.
\end{theorem}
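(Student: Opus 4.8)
The plan is to leverage the two results already in hand: the block description of the (atomic) reversible algebra from Lemma~\ref{lem-Mr}, namely $\h=\bigoplus_{j\in J}(\Ss_j\otimes\ff_j)$ and $\Rr{\T}=\bigoplus_{j\in J}\bigl(\mathcal{B}(\Ss_j)\otimes\unit_{\ff_j}\bigr)$, together with the factor case settled in Theorem~\ref{invariant-factor}. Write $z_j$ for the central projection of $\Rr{\T}$ onto $\Ss_j\otimes\ff_j$; each $z_j$ is a fixed point, so $z_j\in\FT\cap\nT$. The whole statement reduces to showing that $\nT$ is \emph{block-diagonal} for the family $\{z_j\}$, i.e. $z_j\,\nT\,z_k=0$ whenever $j\neq k$: once this holds, every $z_j$ is central in $\nT$, and since $\nT$ is a factor there is a single block, whence $\nT=\mathcal{B}(\Ss)\otimes\unit_{\ff}=\Rr{\T}$, a type~I factor.

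First I would localise to the blocks. Since $z_j\in\FT$ is a fixed central projection, the reduction $\T^{(j)}_t(\cdot):=z_j\T_t(\cdot)z_j$ is a QMS on $\mathcal{B}(z_j\h)$ with $\mathcal N(\T^{(j)})=z_j\nT z_j$ and $\Rr{\T^{(j)}}=z_j\Rr{\T}z_j=\mathcal{B}(\Ss_j)\otimes\unit_{\ff_j}$, a type~I factor. The block-diagonality of invariant states (the argument of Lemma~\ref{lem-fuori-diag}, which applies here because $\FT\subseteq\Rr{\T}$ is block-diagonal for $\{z_j\}$) shows that $\rho$ restricts to a faithful invariant state on each block. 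Hence Theorem~\ref{invariant-factor} applies to every $\T^{(j)}$, giving $z_j\nT z_j=\mathcal{B}(\Ss_j)\otimes\unit_{\ff_j}$ and, through the step carried out in its proof, the triviality $\mathcal N(\T^{\ff_j})=\C\unit_{\ff_j}$ of each fibre semigroup.

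The heart of the argument, and its main obstacle, is to rule out off-diagonal elements. Fix $j\neq k$ and, via polar decomposition of any element of $z_j\nT z_k$, a partial isometry $v\in z_j\nT z_k$ whose support and range projections lie in the diagonal blocks just computed, say $v^*v=p_k\otimes\unit_{\ff_k}$ and $vv^*=q_j\otimes\unit_{\ff_j}$. Because $v$ conjugates $\mathcal{B}(\Ss_k)\otimes\unit_{\ff_k}$ into $\mathcal{B}(\Ss_j)\otimes\unit_{\ff_j}$, it factorises as $v=u\otimes V_0$ with $u\colon\Ss_k\to\Ss_j$ a partial isometry and $V_0\colon\ff_k\to\ff_j$ a unitary. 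Now I use that $\nT$ is the commutant \eqref{eq:iter-comm} and that, by the block forms of $H,L_\ell$ in Lemma~\ref{lem-Mr},
\[\delta_H^{\,n}(L_\ell)=\bigoplus_{m}\unit_{\Ss_m}\otimes\delta_{M_0^{(m)}}^{\,n}\bigl(M_\ell^{(m)}\bigr).\]
Commuting $v=u\otimes V_0$ with these operators (and their adjoints) forces
\[\delta_{M_0^{(j)}}^{\,n}\bigl(M_\ell^{(j)}\bigr)=V_0\,\delta_{M_0^{(k)}}^{\,n}\bigl(M_\ell^{(k)}\bigr)\,V_0^{*}\qquad(n\ge0,\ \ell\ge1).\]
Setting $\widetilde M_0:=V_0M_0^{(k)}V_0^{*}$, this reads $\delta_{M_0^{(j)}}^{\,n}(M_\ell^{(j)})=\delta_{\widetilde M_0}^{\,n}(M_\ell^{(j)})$ for all $n$; comparing these identities at consecutive orders shows that the self-adjoint operator $D:=M_0^{(j)}-\widetilde M_0$ commutes with every $\delta_{M_0^{(j)}}^{\,n}(M_\ell^{(j)})$, whence $D\in\mathcal N(\T^{\ff_j})=\C\unit_{\ff_j}$, i.e. $M_0^{(j)}=V_0M_0^{(k)}V_0^{*}+c\,\unit_{\ff_j}$ for some $c\in\R$.

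With this identity the motion of $v$ becomes explicit: using $\T_t(v)=e^{\mi tH}v\,e^{-\mi tH}$ and the block form of $H$,
\[\T_t(v)=\bigl(e^{\mi tK_j}u\,e^{-\mi tK_k}\bigr)\otimes\bigl(e^{\mi tM_0^{(j)}}V_0\,e^{-\mi tM_0^{(k)}}\bigr),\]
and the scalar relation collapses the fibre factor to a phase, $e^{\mi tM_0^{(j)}}V_0e^{-\mi tM_0^{(k)}}=e^{\mi tc}V_0$. Since $K_j,K_k$ have pure point spectra (item~5 of Lemma~\ref{lem-Mr}), expanding $u=\sum_{a,b}P_a^{(j)}uP_b^{(k)}$ over their spectral projections exhibits $\T_t(v)$ as a weak$^*$-convergent combination of eigenoperators of $\T$, so $v\in\Rr{\T}$; but $z_j\Rr{\T}z_k=0$, forcing $v=0$. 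This contradiction gives $z_j\nT z_k=0$ for $j\neq k$, completing the reduction and hence the theorem. The delicate point throughout is this last step: a priori $H$ need not have pure point spectrum, since the fibre Hamiltonians $M_0^{(j)}$ are unconstrained, and it is precisely the triviality $\mathcal N(\T^{\ff_j})=\C\unit_{\ff_j}$ that forces the fibre contribution to the dynamics of $v$ to be scalar, restoring almost periodicity and thus membership in $\Rr{\T}$.
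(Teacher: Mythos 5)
Your proof is correct, but it handles the decisive step by a genuinely different route from the paper. Both arguments open identically: decompose $\Rr\T=\oplus_{j}\bigl(\mathcal{B}(\Ss_j)\otimes\unit_{\ff_j}\bigr)$, note the central projections $q_j$ (your $z_j$) are fixed points, restrict to each block and apply Theorem~\ref{invariant-factor} to get $q_j\nT q_j=\mathcal{B}(\Ss_j)\otimes\unit_{\ff_j}$. At that point the paper finishes in three lines: a projection $\kb{e}{e}\otimes\unit_{\ff_j}$ is minimal in the corner $q_j\nT q_j$, and it is minimal in all of $\nT$ because any subprojection in $\nT$ has range inside $q_j\h$ and hence already lies in that corner; a factor possessing a non-zero minimal projection is type~I, and the identification with $\Rr\T$ follows. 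You instead attack the off-diagonal corners $z_j\nT z_k$ head-on: you factor an intertwining partial isometry as $u\otimes V_0$, use the commutant characterization of $\nT$ (Proposition~\ref{prop-struct-NT}, item~5) together with $\mathcal{N}(\T^{\ff_j})=\C\unit_{\ff_j}$ to force $M_0^{(j)}=V_0M_0^{(k)}V_0^*+c\,\unit$, and then use the pure point spectrum of the $K_j$ to exhibit $v$ as a weak$^*$ limit of eigenoperators, hence an element of $\Rr\T$, which is block diagonal --- contradiction. Each step checks out (in particular $\bigl(P_a^{(j)}uP_b^{(k)}\bigr)\otimes V_0=\bigl(P_a^{(j)}\otimes\unit_{\ff_j}\bigr)v\bigl(P_b^{(k)}\otimes\unit_{\ff_k}\bigr)$ does lie in $\nT$, so conjugation by $e^{\mi tH}$ legitimately computes $\T_t$ on each term, and your difference trick for $D=M_0^{(j)}-\widetilde M_0$ is sound), and as a by-product your argument shows directly that every element of $\nT$ is almost periodic; the cost is that it is much longer and leans on the asserted-but-unproved (though standard) fact that a spatial isomorphism between $\mathcal{B}(p_k\Ss_k)\otimes\unit_{\ff_k}$ and $\mathcal{B}(q_j\Ss_j)\otimes\unit_{\ff_j}$ splits as $u\otimes V_0$ with $V_0$ unitary. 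The paper's minimal-projection observation buys exactly what you flag as the main obstacle: once one corner of the factor $\nT$ is type~I, $\nT$ itself is type~I, and no analysis of the off-diagonal corners is needed at all.
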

\begin{proof}
Let
$$
\Rr\T=\oplus_j \left(q_j\Rr\T q_j\right)=\oplus_j \left(\mathcal B(\Ss_j)\otimes\unit_{\ff_j}\right)
$$
    be the atomic decomposition of $\Rr\T$, where $q_j$ are the projections onto $\Ss_j\otimes \ff_j\subseteq\h$. Since each $q_j$ is a fixed point, the algebra $q_j\B q_j$ is preserved by the action of every map $\T_t$ and the restriction of the semigroup to this algebra is a QMS $\T^{(j)}$ on $\mathcal B(q_j\h)$ satisfying $\mathcal N(\T^{(j)})=q_j\mathcal N(\T)q_j$. Moreover Theorem \ref{invariant-factor} gives $$q_j\mathcal N(\T)q_j=\mathcal B(\Ss_j)\otimes\unit_{\ff_j}=\Rr{\T^{(j)}}.$$ So by choosing $(e^{(j)}_n)_n$ an onb of $\Ss_j$, the operator $q:=\kb{{\rm e}^{\mi(j)}_1}{e^{(j)}_1}\otimes\unit_{\ff_j}$ is a minimal projection in $q_j\nT q_j$ which is also minimal in $\nT$. Indeed, if there exists another projection $q'\in \nT$ such that $q'\leq q$, we have $\Ran(q')\subseteq\Ran(q)\subseteq q_j\h$, and so $q'$ belongs to $\mathcal N(\T^{(j)})$ giving either $q'=q$ or $q'=0$.\\
Therefore, since $\nT$ is a factor possessing a non zero minimal projection, it is a type $I$ factor and then $\nT=q_j\nT q_j=\mathcal B(\Ss_j)\otimes\unit_{\ff_j}$ for a unique $j\in J$. Finally, the atomic decomposition of $\Rr\T\subseteq\nT$ implies $\Rr\T=\mathcal B(\Ss_j)\otimes\unit_{\ff_j}=\nT$.
\end{proof}

We are finally in position to provide the main result of the paper.
\begin{corollary}\label{invariant-atomic}
If there exists a normal faithful invariant state then $\nT$ is atomic. In particular, $\nT$ coincides with the reversible algebra $\Rr\T$.
\end{corollary}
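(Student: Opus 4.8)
The plan is to reduce the general situation to the factor case already settled in Theorem~\ref{invariant-I-factor}. First I would invoke Proposition~\ref{ntfactors}: the existence of a faithful normal invariant state $\sigma$ forces the splitting $\nT=\bigoplus_{i\in I}p_i\nT p_i$, where the $p_i$ are mutually orthogonal projections in $\Z{\nT}$, minimal in the center, summing to $\unit$, and each corner $p_i\nT p_i$ is a factor. Since $\Z{\nT}\subseteq\FT$ by Proposition~\ref{prop:znt}, each $p_i$ is a fixed point, so the restriction $\T^i$ of $\T$ to $\mathcal{B}(p_i\h)$ is a (uniformly continuous) QMS with $\mathcal{N}(\T^i)=p_i\nT p_i$.

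Next I would produce a faithful invariant state for each $\T^i$. By Proposition~\ref{struct-inv-states} the normalized corner $\sigma_i:=p_i\sigma p_i/\tr{p_i\sigma p_i}$ is a faithful normal invariant state on $\mathcal{B}(p_i\h)$. Thus each $\T^i$ satisfies the hypotheses of Theorem~\ref{invariant-I-factor}: it admits a faithful normal invariant state and its decoherence-free subalgebra $\mathcal{N}(\T^i)=p_i\nT p_i$ is a factor. That theorem then yields that $p_i\nT p_i$ is a type I factor and that $p_i\nT p_i=\Rr{\T^i}$.

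From here atomicity is immediate: the family $(p_i)_{i\in I}$ consists of pairwise orthogonal minimal central projections with $\sum_i p_i=\unit$ and each $p_i\nT p_i$ a type I factor, which is exactly condition (3) of Proposition~\ref{E-atomic}; hence $\nT$ is atomic. For the identification $\nT=\Rr\T$ I would argue along the same central projections. Recall $\Rr\T\subseteq\nT$ always, and since fixed points are eigenvectors for the eigenvalue $0$ one has $\FT\subseteq\Rr\T$, so $p_i\in\Z{\nT}\subseteq\FT\subseteq\Rr\T$. As the $p_i$ lie in $\Rr\T$ and commute with all of $\nT\supseteq\Rr\T$, they are central projections in $\Rr\T$, giving $\Rr\T=\bigoplus_i p_i\Rr\T p_i$; moreover $p_i\Rr\T p_i=\Rr{\T^i}$ for the restricted semigroup. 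Combining this with $\Rr{\T^i}=p_i\nT p_i$ from the previous step yields $\Rr\T=\bigoplus_i p_i\nT p_i=\nT$.

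The bulk of the difficulty has already been absorbed into the factor case (Theorem~\ref{invariant-I-factor}) and into Propositions~\ref{ntfactors} and~\ref{struct-inv-states}; the only genuine point to check here is that the reversible algebra disintegrates along the very same central projections $(p_i)$ that decompose $\nT$, i.e.\ that $p_i\Rr\T p_i$ really is the reversible algebra $\Rr{\T^i}$ of the restricted QMS. I expect this compatibility---that cutting by a fixed central projection commutes with forming the reversible (Jacobs--de Leeuw--Glicksberg) algebra---to be the main thing requiring care, everything else being a reassembly of the block pieces.
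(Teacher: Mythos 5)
Your proposal is correct and follows essentially the same route as the paper: decompose $\nT$ into factor corners via Proposition~\ref{ntfactors}, restrict the semigroup and the faithful state to each corner via Proposition~\ref{struct-inv-states}, apply the factor case (Theorem~\ref{invariant-I-factor}) to get type I factors equal to $\Rr{\T^i}$, and reassemble. Your explicit check that $p_i\Rr\T p_i=\Rr{\T^i}$ (via $p_i\in\FT\subseteq\Rr\T$) is a detail the paper passes over silently in writing $\nT=\oplus_i\Rr{\T^i}=\Rr\T$, and it is a worthwhile addition rather than a deviation.
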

\begin{proof} Let $\nT=\bigoplus_{i\in I} p_i\nT p_i$ be the decomposition of $\nT$ given by Proposition \ref{dec-NT-fattori}. Since each factor $p_i\nT p_i$ coincides with the decoherence-free subalgebra $\mathcal N(\T^{i})$ of $\T^i$, and $\T^i$ has a faithful normal invariant state (see Proposition \ref{struct-inv-states}), Theorem \ref{invariant-I-factor} shows that $p_i\nT p_i=\Rr{\T^i}$ and it is a type I factor. Therefore $\nT=\oplus_{i\in I} \Rr{\T^i}=\Rr\T$ and it is atomic.
\end{proof}

{Corollary above totally solves the problem of decoherence for uniformly continuous QMSs having a faithful normal invariant state. Indeed, it provides a unique decomposition of the algebra $\B$ and shows the atomicity of $\nT$, generalizing what happens in the finite-dimensional case.

Note that in \cite{OL}, the authors had already achieved the same decomposition induced by decoherence when the semigroup commutes with the modular group associated with the faithful invariant state. Here we do not need this condition (that ensures the atomicity of $\nT$). However, for a not necessarily uniformly continuous QMS satisfying other additional hypothesis, they proved EID with respect to a normal semifinite and faithfull weight.}

\begin{remark} {\rm In \cite{CJ}, the authors proved that for a quantum channel acting on $\B$ and with a faithful invariant state, the reversible subalgebra coincides with the decoherence-free subalgebra. The previous corollary reaches the same conclusion also in the continuous setting but through different techniques.}\end{remark}
\appendix
\section{Direct Integrals of von Neumann Algebras}
In this appendix we briefly recall some results about the general theory of the direct integrals of von Neumann algebras. {Our} aim is to decompose {them} into factors and generalize the concept of direct sum of von Neumann algebras. We follow the notations of \cite{kadison}.\\

\begin{definition}[$14.1.1$ in \cite{kadison}]
Let \(\Gamma \) be a Borel space ($\sigma-$compact locally compact space) with a \(\sigma \)-finite Borel measure \(\mu \) and  $(\h_\gamma)_{\gamma\in \Gamma}$ be a family of non-zero separable Hilbert spaces indexed by points $\gamma$ of $\Gamma$. We say that a separable Hilbert space $\h$ is the {\bf direct integral of $(\h_\gamma)_{\gamma\in \Gamma}$ over $(\Gamma,d\mu)$}, i.e.
$$\h=\int_\Gamma^\oplus \h_\gamma d\mu(\gamma)$$
when to each $u\in \h$ then corresponds a function $\gamma\to u(\gamma)$ on $\Gamma$ such that $u(\gamma)\in\h_\gamma$ for each $\gamma$ and 
\begin{itemize}
\item $\gamma\to \scal{u(\gamma)}{v(\gamma)}$ is $\mu-$integrable, when $u,v\in\h$ and 
$$\scal{u}{v}=\int_\Gamma \scal{u(\gamma)}{v(\gamma)}d\mu(\gamma),$$
\item if $u_\gamma\in\h_\gamma$ for all $\gamma$ in $\Gamma$ and $\gamma\to \scal{u_\gamma}{v(\gamma)}$ is integrable for each $v\in \h$, then there is a $u\in\h$ such that $u(\gamma)=u_\gamma$ for almost every $\gamma$. \end{itemize}
We say that $\int_\Gamma^\oplus \h_\gamma d\mu(\gamma)$ and $\gamma\to u(\gamma)$ are the {\bf (direct integral) decompositions} of $\h$ and {\blu $u$}, respectively.
\end{definition}

\begin{example}\rm
\begin{enumerate} 
\item If we consider a constant family, $\h_\gamma=\h$ for all $\gamma\in\Gamma$, then the direct integral $\h=\int_{\Gamma}^{\oplus}\h_\gamma\,\de\mu(\gamma)$ is just the space of measurable functions from $\Gamma$ to $\h$ which are square-integrable with respect to $\mu$, that is $\h=L^2(\Gamma,\mu;\h)$.
\item If $\Gamma$ is discrete and $\mu$ is counting measure on $\Gamma$, then
$$\h=\int_{\Gamma}^{\oplus}\h_\gamma\,\de\mu(\gamma)=\oplus_{\gamma\in\Gamma}\h_\gamma.$$
\end{enumerate}
\end{example}

\begin{definition}[$14.1.6$ in \cite{kadison}] If $\h$ is the direct integral of $(\h_\gamma)_{{\gamma\in\Gamma}} $ over $(\Gamma,\mu)$, an operator $x$ in $\B$ is said to be {\bf decomposable} when there is a function $\gamma\to x(\gamma)$ on $\Gamma$ such that $x(\gamma)\in\mathcal B(\h_\gamma)$ and, for each $u$ in $\h$, $x(\gamma)u(\gamma)=(xu)(\gamma)$ for almost every $\gamma$. If in addition $x(\gamma)=f(\gamma)\unit_{\h_\gamma}$ for some $f\in L^\infty(\Gamma,\mu)$, we say that $x$ is {\bf diagonalizable}.
\end{definition}

\begin{remark}\label{Mf}\rm
Let $f\in L^\infty(\Gamma,\mu)$. Then for every $\xi\in\h$ the map $\gamma\mapsto f(\gamma)\xi(\gamma)\in\h_\gamma$ is measurable and $\nm f(\gamma)\unit_\gamma\nm\leq\nm f\nm_\infty$ i.e. the function $\gamma\mapsto\nm f(\gamma)\unit_\gamma\nm$ is in $L^\infty(\Gamma,\mu)$. This means that the operator $M_f$ defined by $M_f(\gamma)=f(\gamma)\unit_\gamma$ is diagonal.

Viceversa, every diagonal operator $x$ has the form $x=M_f$ for some $f\in L^\infty(\Gamma,\mu)$. Indeed, assuming $x$ positive without loss of generality, by definition, we have $0\leq x(\gamma)=f(\gamma)\unit_\gamma$ almost everywhere, with $f(\gamma)\geq 0$. Therefore $|f(\gamma)|=\nm x(\gamma)\nm\leq\nm x\nm$ giving $f\in L^\infty(\Gamma,\mu)$.

In particular, if $f$ is the characteristic function of some measurable set $A$, then $M_f$ is a projection (i.e. the diagonalizable projection corresponding to $A$) and it can be written as $p=\int_A \unit_{\h_\gamma}d\mu(\gamma)$.
\end{remark}

\begin{proposition} If $\h$ is the direct integral of $(\h_\gamma)_{\gamma\in\Gamma}$ over $(\Gamma,\mu)$ and $x_1,x_2$ are decomposable, sep-adjoint operators on $\h$ such that $x_1\leq x_2$, then $x_1(\gamma)\leq x_2(\gamma)$ almost everywhere. If $x$ is decomposable, then $\gamma\to \|x(\gamma)\|$ is an essentially bounded measurable function with essential bound $\|x\|$.
\end{proposition}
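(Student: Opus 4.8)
The plan is to reduce the order-preservation claim to the single statement that a positive decomposable operator has almost all fibres positive, and then to extract the norm statement from it. Throughout I would fix, as the theory of direct integrals provides, a \emph{fundamental sequence} $(\xi_n)_{n\in\N}$ of measurable vector fields such that $\{\xi_n(\gamma)\}_{n\in\N}$ is dense in $\h_\gamma$ for almost every $\gamma$; this countable family is the device that lets me convert fibrewise assertions into assertions controlled by countably many null sets.

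For the first assertion, set $y:=x_2-x_1$, which is again decomposable and self-adjoint with $y(\gamma)=x_2(\gamma)-x_1(\gamma)$ a.e., and $y\geq 0$. It suffices to show $y(\gamma)\geq 0$ a.e. The key trick is localisation by diagonalisable projections: for any measurable $A\subseteq\Gamma$ and any $n$, the vector field $u=\chi_A\,\xi_n$ lies in $\h$, so positivity of $y$ gives
\[
0\le \scal{u}{yu}=\int_A \scal{\xi_n(\gamma)}{y(\gamma)\xi_n(\gamma)}\,\de\mu(\gamma).
\]
Since $A$ is arbitrary, the integrand is $\ge 0$ a.e. Replacing $\xi_n$ by the combination $\alpha\xi_m+\beta\xi_n$ and letting $\alpha,\beta$ range over the countable dense set $\mathbb{Q}+\mi\mathbb{Q}$, I obtain a single conull set off which $\scal{\alpha\xi_m(\gamma)+\beta\xi_n(\gamma)}{y(\gamma)(\alpha\xi_m(\gamma)+\beta\xi_n(\gamma))}\ge 0$ for all such $\alpha,\beta$ and all $m,n$ simultaneously. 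By continuity in $(\alpha,\beta)$ and density of $\{\xi_n(\gamma)\}$ in $\h_\gamma$, the quadratic form $v\mapsto\scal{v}{y(\gamma)v}$ is nonnegative on a dense subset, hence everywhere, so $y(\gamma)\ge 0$ a.e. I expect this passage --- from the \emph{integral} inequality to \emph{fibrewise} positivity for \emph{all} vectors at once --- to be the main obstacle, and the careful bookkeeping of countably many null sets (over $m,n$ and the rational coefficients) is precisely what makes it go through.

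For the second assertion, measurability of $\gamma\mapsto\norm{x(\gamma)}$ follows by writing it as a countable supremum: since each $x\xi_n$ is a measurable vector field, the functions $\gamma\mapsto\norm{(x\xi_n)(\gamma)}=\norm{x(\gamma)\xi_n(\gamma)}$ and $\gamma\mapsto\norm{\xi_n(\gamma)}$ are measurable, and density of $(\xi_n(\gamma))_n$ gives $\norm{x(\gamma)}=\sup_{n:\,\xi_n(\gamma)\neq 0}\norm{x(\gamma)\xi_n(\gamma)}/\norm{\xi_n(\gamma)}$ a.e., a countable supremum of measurable functions. For the essential bound, put $M:=\norm{x}$. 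Applying the first part to the inequality $x^*x\le M^2\unit$ of decomposable self-adjoint operators (noting that $M^2\unit$ is diagonalisable with fibres $M^2\unit_{\h_\gamma}$) yields $x(\gamma)^*x(\gamma)\le M^2\unit_{\h_\gamma}$ a.e., i.e.\ $\norm{x(\gamma)}\le M$ a.e., whence $\mathrm{ess\,sup}_\gamma\norm{x(\gamma)}\le\norm{x}$. Conversely, if $\norm{x(\gamma)}\le N$ a.e., then for every $u\in\h$
\[
\norm{xu}^2=\int_\Gamma\norm{x(\gamma)u(\gamma)}^2\,\de\mu(\gamma)\le N^2\int_\Gamma\norm{u(\gamma)}^2\,\de\mu(\gamma)=N^2\norm{u}^2,
\]
so $\norm{x}\le N$; taking $N=\mathrm{ess\,sup}_\gamma\norm{x(\gamma)}$ gives the reverse inequality, and hence the asserted equality.
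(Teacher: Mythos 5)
This proposition is one the paper only \emph{recalls} (it is Proposition 14.1.9 of Kadison--Ringrose, cited in the Appendix) and does not prove, so there is no in-paper argument to compare against; your proof is essentially the standard textbook one, and it is correct. The two key devices --- a fundamental sequence $(\xi_n)$ whose fibres are dense a.e., and localisation by the diagonalizable projections $\int_A\unit_{\h_\gamma}\,\de\mu(\gamma)$ to pass from the integral inequality $\int_A\scal{\xi_n(\gamma)}{y(\gamma)\xi_n(\gamma)}\,\de\mu(\gamma)\ge 0$ to a fibrewise one off a single null set --- are exactly the right ones, and your bookkeeping of the countably many null sets is the part that matters. Two small remarks. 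First, in the positivity step the rational combinations $\alpha\xi_m+\beta\xi_n$ are only needed if your fundamental sequence is merely \emph{total} fibrewise; if, as you assume, $\{\xi_n(\gamma)\}$ is already dense, then continuity of $v\mapsto\scal{v}{y(\gamma)v}$ on the bounded operator $y(\gamma)$ makes that step redundant (and conversely, if you only have totality, pairs do not suffice --- you would need arbitrary finite rational combinations). Second, deducing $\norm{x(\gamma)}\le\norm{x}$ from $x^*x\le\norm{x}^2\unit$ silently uses that adjoints and products of decomposable operators decompose fibrewise; this is standard (and in Kadison--Ringrose precedes the proposition), but you could avoid it entirely by applying your localisation trick directly to $\norm{x(\chi_A\xi_n)}^2\le\norm{x}^2\norm{\chi_A\xi_n}^2$, which gives $\norm{x(\gamma)\xi_n(\gamma)}\le\norm{x}\,\norm{\xi_n(\gamma)}$ a.e.\ for all $n$ and hence the bound by density. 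Neither point is a gap.
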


Now we can introduce the concept of decomposable von Neumann subalgebra $\mathcal{R}$ acting on $\h$, whenever $\mathcal{R}$ is a subalgebra of the algebra of decomposable operators. Moreover, the decomposition $\gamma\to\mathcal{R}_\gamma$ is unique, so we write
\begin{equation}
	\mathcal{R}=\int_\Gamma^\oplus\mathcal{R}_\gamma\,\de\mu(\gamma)
\end{equation}
Moreover we have that if $\mathcal{R}$ contains the algebra of diagonalizable operators then $(\mathcal{R}')(\gamma)=(\mathcal{R}_\gamma)'$ almost everywhere (Proposition 14.1.24 \cite{kadison}), {i.e.
\begin{equation}\label{int-commutant}
\mathcal{R}^\prime=\int_\Gamma^\oplus\mathcal{R}_\gamma^\prime\,\de\mu(\gamma).
\end{equation}}
Finally, the center \(Z(\mathcal{R})\) of $\mathcal{R}$ is also expressed as a direct integral
\begin{equation}\label{int-center}
	Z(\mathcal{R}) = \int_\Gamma^\oplus Z(\mathcal{R})_\gamma\,\de\mu(\gamma)
\end{equation}
where $Z(\mathcal{R})_\gamma$ coincides with $Z(\mathcal{R}_\gamma)$.
In particular, \(Z(\mathcal{R})\) coincides with the diagonal algebra if and only if
\(\mathcal{R}_\gamma\) is a factor for almost every \(\gamma\in\Gamma \).

Now we want to apply this theory to prove that, given a von Neumann algebra $\mathcal{R}$ on a separable Hilbert space $\h$, any von Neumann subalgebra \(\mathcal{A}\) of the the center \(Z(\mathcal{R})\) induces a integral decomposition of $\h$ such that $\mathcal{R}$ becomes decomposable. 

\begin{theorem}\label{th:disintegration}[Theorem $14.2.1$, Theorem $14.2.2$ {in} \cite{kadison}]
{Let $\mathcal{R}$ be a von Neumann algebra acting on a separable Hilbert space \(\h \). 
If $\mathcal A$ is an abelian von Neumann subalgebra of the center of $\mathcal{R}$, then:}
\begin{enumerate}
\item there is a (locally compact complete separable metric) measure space $(\Gamma,\mu)$ such that $\h$ is (unitarily equivalent to) the direct integral of Hilbert spaces $(\h_\gamma)_{\gamma\in\Gamma}$ over $(\Gamma,\mu)$,
\item $\mathcal A$ is (unitarily equivalent to) the algebra of diagonalizable operators relative to this decomposition,
\item $\mathcal{R}$ is a decomposable von Neumann subalgebra, {i.e.
$$\int_\Gamma^\oplus \mathcal{R}_\gamma\,\de\mu(\gamma)$$
for a suitable family of von Neumann algebras $(\mathcal{R}_\gamma)_{\gamma\in\Gamma}$ acting on $\h_\gamma$.}
In particular $\mathcal{R}_\gamma$ is a factor for almost every $\gamma$ if and only if $\mathcal A$ coincides with $\mathcal Z(\mathcal{R})$.
\end{enumerate}
\end{theorem}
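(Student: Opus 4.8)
The plan is to build the decomposition from the spectral structure of the abelian algebra $\mathcal{A}$ and then transport the factoriality question to the center via the direct-integral calculus recalled above. The three items will be produced in order: the base measure space and the fibration first, the diagonalizability of $\mathcal{A}$ and the decomposability of $\mathcal{R}$ next, and the factor criterion last.

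First I would fix the base measure space. Since $\h$ is separable, $\mathcal{A}$ is a countably generated abelian von Neumann algebra, hence singly generated by one self-adjoint element (combine countably many commuting self-adjoint generators into a single bounded self-adjoint operator with suitably separated spectrum). The spectral theorem for abelian von Neumann algebras then yields a $*$-isomorphism $\mathcal{A}\cong L^\infty(\Gamma,\mu)$ for a $\sigma$-finite measure space, and by choosing a concrete model one arranges $(\Gamma,\mu)$ to be of the required locally compact, complete, separable metric type. This realizes $\mathcal{A}$ as a multiplication algebra and fixes the underlying space of item (1).

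Second, and this is the technical core, I would construct the measurable field $(\h_\gamma)_{\gamma\in\Gamma}$ together with the unitary identifying $\h$ with $\int_\Gamma^\oplus\h_\gamma\,\de\mu(\gamma)$. Starting from a countable fundamental sequence of vectors in $\h$, one represents the $\mathcal{A}$-sesquilinear pairings of these vectors as $L^1$-functions on $\Gamma$ via the isomorphism of the first step, disintegrates the inner product of $\h$ against $\mu$, and builds each fiber $\h_\gamma$ as the corresponding pointwise completion. Under this identification $\mathcal{A}$ acts by multiplication, hence as the diagonalizable operators, which gives item (2). I expect the measurable-selection and disintegration bookkeeping here to be the main obstacle, since this is precisely where separability is genuinely used and where the delicate handling of null sets lives.

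Third, for item (3) I would invoke the commutant description of decomposable operators. Because $\mathcal{A}\subseteq\Z{\mathcal{R}}\subseteq\mathcal{R}$, every element of $\mathcal{R}$ commutes with the now-diagonalizable algebra $\mathcal{A}$; since the commutant of the diagonalizable algebra is exactly the algebra of decomposable operators, $\mathcal{R}\subseteq\mathcal{A}'$ consists of decomposable operators, and therefore $\mathcal{R}=\int_\Gamma^\oplus\mathcal{R}_\gamma\,\de\mu(\gamma)$ for an essentially unique field $(\mathcal{R}_\gamma)_\gamma$. The factoriality criterion then follows by transporting the question to the center: by \eqref{int-center} we have $\Z{\mathcal{R}}=\int_\Gamma^\oplus\Z{\mathcal{R}_\gamma}\,\de\mu(\gamma)$, whereas the diagonalizable algebra is $\int_\Gamma^\oplus\C\unit_{\h_\gamma}\,\de\mu(\gamma)$. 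Hence $\mathcal{R}_\gamma$ is a factor, i.e. $\Z{\mathcal{R}_\gamma}=\C\unit_{\h_\gamma}$, for almost every $\gamma$ precisely when $\Z{\mathcal{R}}$ coincides with the diagonalizable algebra, that is, with $\mathcal{A}$, which is the last clause of item (3).
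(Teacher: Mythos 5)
Your outline is correct, but note that the paper does not prove this statement at all: it is quoted verbatim from Kadison--Ringrose (Theorems 14.2.1 and 14.2.2), so there is no in-paper argument to compare against. What you have written is essentially the standard proof from that reference: realize the abelian algebra $\mathcal{A}$ as $L^\infty(\Gamma,\mu)$ via single generation plus the spectral theorem (single generation is where separability of $\h$ enters), build the measurable field $(\h_\gamma)_\gamma$ by disintegrating the inner product against a scalar measure associated with a separating vector for $\mathcal{A}$, identify $\mathcal{A}$ with the diagonalizable operators, and then get decomposability of $\mathcal{R}$ from $\mathcal{R}\subseteq\mathcal{A}'$ together with the fact that the commutant of the diagonal algebra is the algebra of decomposable operators. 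Your treatment of the factoriality criterion via $\Z{\mathcal{R}}=\int_\Gamma^\oplus\Z{\mathcal{R}_\gamma}\,\de\mu(\gamma)$ is exactly the paper's equation \eqref{int-center} (legitimately applicable here since $\mathcal{A}\subseteq\Z{\mathcal{R}}\subseteq\mathcal{R}$, so $\mathcal{R}$ contains the diagonalizable operators), and both implications are handled correctly. The one place where your proposal falls short of a complete proof is the one you yourself flag: the actual construction of the measurable field and the verification of the two direct-integral axioms (measurable selections, handling of null sets, and the essential uniqueness of the fiber decomposition used implicitly in the ``only if'' direction of item (3)) is only gestured at; for a result of this type that is acceptable as a citation to \cite{kadison}, which is precisely what the paper does.
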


\begin{theorem}\label{Mfactors}[Theorem $14.2.4$ in \cite{kadison}] If $\mathcal{R}$ is a von Neumann algebra acting on a separable Hilbert space $\h$ and $\h$ is the direct integral of $(\h_\gamma)_{\gamma\in\Gamma}$ in a decomposition relative to an abelian von Neuman subalgebra of $\mathcal A$ of $\mathcal{R}'$, then $\mathcal{R}_\gamma=\mathcal B(\h_\gamma)$ almost everywhere if and only if $\mathcal A$ is a maximal abelian subalgebra of $\mathcal{R}'$.
\end{theorem}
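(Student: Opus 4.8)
The plan is to reduce the fibrewise condition $\mathcal{R}_\gamma=\mathcal{B}(\h_\gamma)$ to a statement about commutants, and then to recognise the latter as the definition of a maximal abelian subalgebra. First I would record the set-up produced by the decomposition: since it is taken relative to $\mathcal{A}\subseteq\mathcal{R}'$, Theorem \ref{th:disintegration} makes $\mathcal{A}$ the algebra of diagonalisable operators, so that $\mathcal{A}'$ is exactly the algebra of decomposable operators. From $\mathcal{A}\subseteq\mathcal{R}'$ we get $\mathcal{R}\subseteq\mathcal{A}'$, hence $\mathcal{R}$ is decomposable, $\mathcal{R}=\int_\Gamma^\oplus\mathcal{R}_\gamma\,\de\mu(\gamma)$. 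Because a von Neumann algebra equals $\mathcal{B}(\h_\gamma)$ precisely when its commutant is trivial, the target equality $\mathcal{R}_\gamma=\mathcal{B}(\h_\gamma)$ almost everywhere is equivalent to $\mathcal{R}_\gamma'=\C\unit_{\h_\gamma}$ almost everywhere, i.e. to $\int_\Gamma^\oplus\mathcal{R}_\gamma'\,\de\mu(\gamma)=\mathcal{A}$ by uniqueness of the fibre decomposition.

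The central step is to identify $\int_\Gamma^\oplus\mathcal{R}_\gamma'\,\de\mu(\gamma)$ with the relative commutant $\mathcal{A}'\cap\mathcal{R}'$. Here lies the main obstacle: equation \eqref{int-commutant} computes a commutant fibrewise only for algebras that contain the diagonalisable operators, and $\mathcal{R}$ itself need not. I would circumvent this by passing to $\mathcal{S}:=\mathcal{R}\vee\mathcal{A}$, which is still contained in $\mathcal{A}'$ (hence decomposable) and now contains the diagonal algebra $\mathcal{A}$; since the fibres of $\mathcal{A}$ are the scalars $\C\unit_{\h_\gamma}$, adjoining $\mathcal{A}$ does not enlarge the fibres, so $\mathcal{S}_\gamma=\mathcal{R}_\gamma$ almost everywhere. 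Applying \eqref{int-commutant} to $\mathcal{S}$ then yields $\mathcal{S}'=\int_\Gamma^\oplus\mathcal{R}_\gamma'\,\de\mu(\gamma)$, while $\mathcal{S}'=(\mathcal{R}\vee\mathcal{A})'=\mathcal{R}'\cap\mathcal{A}'$; combining these gives $\mathcal{R}'\cap\mathcal{A}'=\int_\Gamma^\oplus\mathcal{R}_\gamma'\,\de\mu(\gamma)$.

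Finally I would invoke the standard characterisation of a maximal abelian subalgebra: an abelian von Neumann subalgebra $\mathcal{A}$ of a von Neumann algebra $\mathcal{N}$ is maximal abelian in $\mathcal{N}$ if and only if it equals its relative commutant $\mathcal{A}'\cap\mathcal{N}$ (if $x=x^*\in\mathcal{A}'\cap\mathcal{N}$, then $\mathcal{A}\vee\{x\}$ is an abelian subalgebra of $\mathcal{N}$ containing $\mathcal{A}$, forcing $x\in\mathcal{A}$, and the converse is immediate). Taking $\mathcal{N}=\mathcal{R}'$, this says $\mathcal{A}$ is maximal abelian in $\mathcal{R}'$ exactly when $\mathcal{A}'\cap\mathcal{R}'=\mathcal{A}$. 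Stringing together the equivalences of the previous two paragraphs gives $\mathcal{R}_\gamma=\mathcal{B}(\h_\gamma)$ a.e. $\iff\int_\Gamma^\oplus\mathcal{R}_\gamma'\,\de\mu=\mathcal{A}\iff\mathcal{A}'\cap\mathcal{R}'=\mathcal{A}\iff\mathcal{A}$ is maximal abelian in $\mathcal{R}'$, which is the claim. The delicate points to verify are the fibrewise identity $\mathcal{S}_\gamma=\mathcal{R}_\gamma$ and the measurable matching between an almost-everywhere trivial fibre commutant and the global equality $\int_\Gamma^\oplus\mathcal{R}_\gamma'\,\de\mu=\mathcal{A}$, both of which rest on the uniqueness of the direct-integral decomposition of a decomposable von Neumann algebra.
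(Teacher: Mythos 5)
Your argument is correct and is essentially the standard one: reducing $\mathcal{R}_\gamma=\mathcal{B}(\h_\gamma)$ a.e.\ to $(\mathcal{R}\vee\mathcal{A})'=\mathcal{R}'\cap\mathcal{A}'=\mathcal{A}$ via the fibrewise commutant formula \eqref{int-commutant}, and then invoking the relative-commutant characterisation of maximal abelian subalgebras, is precisely how Theorem $14.2.4$ is proved in Kadison--Ringrose; the two delicate points you flag ($\mathcal{S}_\gamma=\mathcal{R}_\gamma$ a.e.\ and the uniqueness of the fibre decomposition) are indeed the only technical inputs, and both hold on a separable Hilbert space. Note that the paper itself offers no proof of this statement, quoting it directly from \cite{kadison}, so your derivation is a faithful reconstruction rather than an alternative route.
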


{
\section*{Acknowledgments}
The financial support of MIUR FIRB 2010 project RBFR10COAQ
{\it Quantum Markov Semigroups and their Empirical Estimation}
and of GNAMPA 2020, project PRR-20200121-130335-957
{\it Evoluzioni Markoviane Quantistiche} are gratefully acknowledged.
}

\end{document}